\newcommand{\mapipp}{\mapi^{\prime \prime}}		
\newcommand{\extend}[3]{\mapset{#1}{#2}\oplus #3}
\newcommand{\xhat}{\hat{x}}
\newcommand{\yhat}{\hat{y}}
\newcommand{\zhat}{\hat{z}}
\newcommand{\comp}[1]{{\sf cc}(#1)}
\newcommand{\stree}[1]{T_{#1}}
\newcommand{\Encst}[1]{\hat{H}_{#1}^T}
\newcommand{\Encsubst}[2]{H_{#1}^{T, #2}}
\newcommand{\Xic}[2]{X_{#1}(#2)}
\newcommand{\degG}{d}
\newcommand{\mapf}{f}
\newcommand{\verlist}{L}
\newcommand{\colorset}{C}
\newcommand{\numk}{k}
\newcommand{\cind}[1]{#1}
\newcommand{\oriG}{H}
\newcommand{\ters}{s}
\newcommand{\tert}{t}
\newcommand{\Path}{P}
\newcommand{\minlen}{d}
\newcommand{\dist}[2]{{\sf dist}(#1, #2)}
\newcommand{\layer}[1]{D_{#1}}
\newcommand{\Vlay}{U_{\sf layer}}
\newcommand{\Vfor}{U_{\sf forbid}}
\newcommand{\omomi}{\omega}
\newcommand{\RRR}{\mathbb{R}}
\newcommand{\Vspine}{V_{{\rm S}}}
\newcommand{\Vleaf}{V_{{\rm L}}}
\newcommand{\mapg}{g}
\newcommand{\Ri}[1]{{\sf R}_{#1}}
\newcommand{\Rizero}[1]{\Ri{#1}^{\ini}}
\newcommand{\mapi}{g}
\newcommand{\mapip}{\mapi^\prime}
\newcommand{\mapj}{h}
\newcommand{\nset}[1]{{\sf EN}(#1)}
\newcommand{\eqrel}[1]{\sim_{#1}}
\newcommand{\spine}[1]{{\sf sp}(#1)}
\newcommand{\parF}[1]{\mathcal{G}_{#1}^{\ini}}
\newcommand{\Enc}[1]{H_{#1}}
\newcommand{\Encsub}[2]{\Enc{#1}^{#2}}
\newcommand{\Enct}[1]{\hat{H}_{#1}}
\newcommand{\Encp}[1]{\hat{H}_{#1}^\prime}
\newcommand{\mapset}[2]{\Phi_{#1} (#2)}
\newcommand{\labcf}[1]{{\sf col}_{#1}}
\newcommand{\labinif}[1]{{\sf ini}_{#1}}
\newcommand{\labtarf}[1]{{\sf tar}_{#1}}
\newcommand{\labc}[2]{{\sf col}_{#1}(#2)}
\newcommand{\labini}[2]{{\sf ini}_{#1}(#2)}
\newcommand{\labtar}[2]{{\sf tar}_{#1}(#2)}
\newcommand{\labcft}[1]{\hat{{\sf col}}_{#1}}
\newcommand{\labinift}[1]{\hat{{\sf ini}}_{#1}}
\newcommand{\labtarft}[1]{\hat{{\sf tar}}_{#1}}
\newcommand{\labct}[2]{\hat{{\sf col}}_{#1}(#2)}
\newcommand{\labinit}[2]{\hat{{\sf ini}}_{#1}(#2)}
\newcommand{\labtart}[2]{\hat{{\sf tar}}_{#1}(#2)}
\newcommand{\ini}{0}
\newcommand{\tar}{r}
\newcommand{\seq}[1]{\mathcal{#1}}
\newcommand{\Rgraph}[2]{{\sf R}_{#1}^{#2}}
\newcommand{\hset}[1]{\{ #1\}}
\newcommand{\hseq}[1]{\langle #1\rangle}
\newcounter{one}
\newcounter{two}
\newcounter{three}
\newcounter{four}
\newcounter{five}
\newcounter{six}
\newenvironment{listing}[1]{%
        \begin{list}{*}{%
                 \settowidth{\labelwidth}{#1}%
                 \setlength{\leftmargin}{\labelwidth}%
                  \advance \leftmargin by 12pt
                   \setlength{\itemsep}{0pt}%
                   \setlength{\parsep}{0pt}%
                   \setlength{\topsep}{0pt}%
                   \setlength{\parskip}{0pt}%
}%
}{%
\end{list}}
\title{The List Coloring Reconfiguration Problem\\ for Bounded Pathwidth Graphs}
\author{
	Tatsuhiko Hatanaka
\and
	Takehiro Ito
\and
	Xiao Zhou
}
\institute{
	Graduate School of Information Sciences, 
	Tohoku University, \\
	Aoba-yama 6-6-05, Sendai 980-8579, Japan.\\
	\email{\{hatanaka, takehiro, zhou\}@ecei.tohoku.ac.jp}
}
\begin{document}
\maketitle

\begin{abstract}
	We study the problem of transforming one list (vertex) coloring of a graph into another list coloring by changing only one vertex color assignment at a time, while at all times maintaining a list coloring, 	given a list of allowed colors for each vertex.
	This problem is known to be PSPACE-complete for bipartite planar graphs.
	In this paper, we first show that the problem remains PSPACE-complete even for bipartite series-parallel graphs, which form a proper subclass of bipartite planar graphs. 
	We note that our reduction indeed shows the PSPACE-completeness for graphs with pathwidth two, and it can be extended for threshold  graphs.
	In contrast, we give a polynomial-time algorithm to solve the problem for graphs with pathwidth one. 
	Thus, this paper gives precise analyses of the problem with respect to pathwidth. 
%
\end{abstract}

\section{Introduction}

	Graph coloring is one of the most fundamental research topics in the field of theoretical computer science. 
	Let $\colorset = \{ \cind{1}, \cind{2}, \ldots, \cind{\numk} \}$ be the set of $\numk$ colors.
	A (proper) \textit{$\numk$-coloring} of a graph $G = (V,E)$ is a mapping $\mapf : V \to \colorset$ such that $\mapf(v) \neq \mapf(w)$ for every edge $vw \in E$.
	In {\em list coloring}, each vertex $v \in V$ has a set $\verlist(v) \subseteq \colorset$ of colors, called the \textit{list} of $v$.
	Then, a $\numk$-coloring $\mapf$ of $G$ is called a \textit{$\numk$-list coloring} of $G$ if $\mapf(v) \in \verlist(v)$ holds for every vertex $v \in V$.
	Figure~\ref{fig:example}(b) illustrates four $\numk$-list colorings of the same graph $G$ with the same list $L$ depicted in \figurename~\ref{fig:example}(a); the color assigned to each vertex is attached to the vertex.
	Clearly, a $\numk$-coloring of $G$ is a $\numk$-list coloring of $G$ for which $\verlist(v) = \colorset$ holds for every vertex $v$ of $G$, and hence $\numk$-list coloring is a generalization of $k$-coloring. 

	Graph coloring has several practical applications, such as in scheduling, frequency assignments. 
	For example, in the frequency assignment problem, each vertex corresponds to a base station and each edge represents the physical proximity and hence the two corresponding base stations have the high potential of interference.  
	Each color represents a channel of a particular frequency, and we wish to find an assignment of channels to the base stations without any interference. 
	Furthermore, in list coloring, each base station can have a list of channels that can be assigned to it.

	\subsection{Our problem}

	However, a practical issue in channel assignments requires that the formulation should be considered in more dynamic situations.
	One can imagine a variety of practical scenarios where a $\numk$-list coloring (e.g., representing a feasible channel assignment) needs to be transformed (to use a newly found better solution or to satisfy new side constraints) by individual color changes (keeping the network functionality and preventing the need for any coordination) while maintaining feasibility (so that the users receive service even during the reassignment).

	In this paper, we thus study the following problem: 
	Suppose that we are given two $\numk$-list colorings of a graph $G$ (e.g., the leftmost and rightmost ones in \figurename~\ref{fig:example}(b)), and we are asked whether we can transform one into the other via $\numk$-list colorings of $G$ such that each differs from the previous one in only one vertex color assignment.
	We call this decision problem the {\sc $\numk$-list coloring reconfiguration} problem. 
	For the particular instance of \figurename~\ref{fig:example}(b), the answer is ``yes,'' as illustrated in \figurename~\ref{fig:example}(b), where the vertex whose color assignment was changed from the previous one is depicted by a black circle. 

\begin{figure}[t]
\begin{center}
\includegraphics[width=\textwidth]{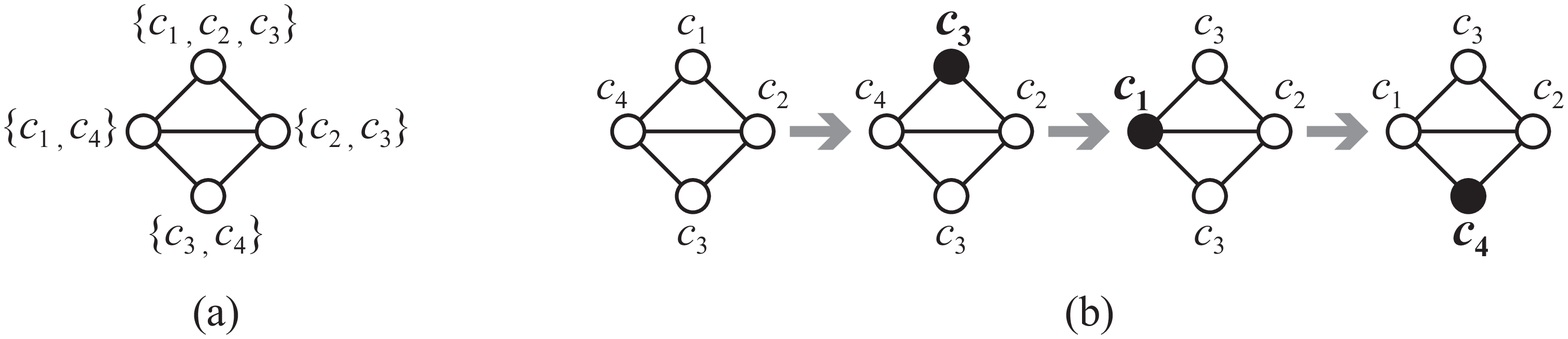}
\end{center}
\vspace{-2em}
\caption{(a) Graph $G$ and its list $L$, and (b) a sequence of $\numk$-list colorings of $G$.}
\label{fig:example}
\vspace{-1em}
\end{figure}

\subsection{Known and related results}
	
		Recently, similar settings of problems have been extensively studied in the framework of reconfiguration problems~\cite{IDHPSUU}, which arise when we wish to find a step-by-step transformation between two feasible solutions of a problem such that all intermediate solutions are also feasible.
	This reconfiguration framework has been applied to several well-studied combinatorial problems, including 
satisfiability~\cite{Kolaitis,MTY11}, 
independent set~\cite{BKW14,HearnDemaine2005,IDHPSUU,KMM12,MNRSS13},
set cover, matching~\cite{IDHPSUU}, 
shortest path~\cite{Bon12,Bon13,KMM11}, 
list edge-coloring~\cite{IKD09,IKZ11},
list $L(2,1)$-labeling~\cite{IKOZ14}, 
and so on.  

	In particular, the {\sc $\numk$-coloring reconfiguration} problem (i.e., {\sc $\numk$-list coloring reconfiguration} in which $\verlist(v) = \colorset$ holds for every vertex $v$) 
is one of the most well-studied reconfiguration problems, from the various viewpoints~\cite{BJLPP14,BB13,BC09,BM14,CHJ11,JKKPP14,Wro14}, as follows.

	Bonsma and Cereceda~\cite{BC09} proved that {\sc $k$-coloring reconfiguration} is PSPACE-complete for $k \ge 4$; 
they also proved that {\sc $\numk$-list coloring reconfiguration} is PSPACE-complete, even for bipartite planar graphs and $\numk = 4$.  
	On the other hand, Cereceda et al.~\cite{CHJ11} proved that {\sc $k$-coloring reconfiguration} is solvable for any graph in polynomial time for the case where $1 \le k \le 3$. 

	Then, some sufficient conditions have been proposed so that any pair of $\numk$-colorings of a graph has a desired transformation. 
	Cereceda~\cite{Cer07} gave a sufficient condition with respect to the number $\numk$ of colors:
if $\numk$ is at least the treewidth of a graph $G$ plus two, then there is a desired transformation between any pair of $\numk$-colorings of $G$;
the length of the transformation (i.e., the number of recoloring steps) is estimated by Bonamy and Bousquet~\cite{BB13}.
	Bonamy et al.~\cite{BJLPP14} gave a sufficient condition with respect to graph structures:
for example, chordal graphs and chordal bipartite graphs satisfy their sufficient condition. 
	
	Recently, Bonsma et al.~\cite{BM14} and Johnson et al.~\cite{JKKPP14} independently developed a fixed-parameter algorithm to solve {\sc $\numk$-coloring reconfiguration} when parameterized by $\numk + \ell$, where $\numk$ is the number of colors and $\ell$ is the number of recoloring steps.  
	In contrast, if the problem is parameterized only by $\ell$, then it is W[1]-hard~\cite{BM14} and does not admit a polynomial kernelization unless the polynomial hierarchy collapses~\cite{JKKPP14}. 

	In this way, even for the non-list version, only a few results are known from the viewpoint of polynomial-time solvability.
	Furthermore, as far as we known, no algorithmic result has been obtained for the list version. 
		
\subsection{Our contribution}

	In this paper, we study the {\sc $\numk$-list coloring reconfiguration} problem from the viewpoint of graph classes, especially pathwidth of graphs. 
(The definition of pathwidth will be given in Section~\ref{sec:pre}.)

	We first show that the problem remains PSPACE-complete even for graphs with pathwidth two. 
	In contrast, we give a polynomial-time algorithm to solve the problem for graphs with pathwidth one. 
	Thus, this paper gives precise analyses of the problem with respect to pathwidth. 

	Indeed, our reduction for the PSPACE-completeness proof constructs a bipartite series-parallel graph (whose treewidth is two), which is a bipartite planar graph. 
	We note that the problem of finding one $\numk$-list coloring of a given graph can be solved in polynomial time for bounded treewidth graphs (and hence for bounded pathwidth graphs)~\cite{JS97}.
	However, our proof shows that the reconfiguration variant is PSPACE-complete even if treewidth and pathwidth are two. 
	Furthermore, as a byproduct, our reduction can be extended for threshold  graphs.

\section{Preliminaries}
\label{sec:pre}

	We assume without loss of generality that graphs are simple and connected.
	Let $G=(V,E)$ be a graph with vertex set $V$ and edge set $E$;
we sometimes denote by $V(G)$ and $E(G)$ the vertex set and edge set of $G$, respectively. 
	For a vertex $v$ in $G$, we denote by $\degG(v)$ the degree of $v$ in $G$. 
	For a vertex subset $V^\prime \subseteq V$, we denote by $G[V^\prime]$ the subgraph of $G$ induced by $V^\prime$. 
\medskip

	We now define the notion of pathwidth~\cite{RS83}. 
	A {\em path-decomposition} of a graph $G$ is a sequence of subsets $X_i$ of vertices in $G$ such that 
    \begin{listing}{aaa}
       \item[(1)] $\bigcup_{i} X_i = V(G)$; 
       \item[(2)] for each $vw \in E(G)$, there is at least one subset $X_i$ with $v,w \in X_i$; and
       \item[(3)] for any three indices $p, q, r$ such that $p \le q \le r$, $X_p \cap X_r \subseteq X_q$.
    \end{listing}
	The {\em width} of a path-decomposition is defined as $\max_{i} |X_i|-1$, and the {\em pathwidth} of $G$ is the minimum $t$ such that $G$ has a path-decomposition of width $t$. 

	To develop our algorithm in Section~\ref{sec:algorithm}, it is important to notice that every connected graph of pathwidth one is a caterpillar~\cite{PT99}.
	A caterpillar will be defined in Section~\ref{sec:algorithm}, but an example can be found in \figurename~\ref{fig:caterpillar}. 

\medskip
	For a graph $G$ with a list $\verlist$, we define the {\em reconfiguration graph} $\Rgraph{G}{\verlist}$ as follows:
each node of $\Rgraph{G}{\verlist}$ corresponds to a $\numk$-list coloring of $G$, and two nodes of $\Rgraph{G}{\verlist}$ are joined by an edge if their corresponding $\numk$-list colorings $\mapf$ and $\mapf^\prime$ satisfy $|\{v \in V : \mapf(v) \neq \mapf^\prime(v) \}| = 1$, that is, $\mapf^\prime$ can be obtained from $\mapf$ by changing the color assignment of a single vertex $v$.
	We will refer to a {\em node} of $\Rgraph{G}{\verlist}$ in order to distinguish it from a vertex of $G$. 
	Since we have defined $\numk$-list colorings as the nodes of $\Rgraph{G}{\verlist}$, we use graph terms such as adjacency and path for $\numk$-list colorings. 
	For notational convenience, we sometimes identify a node of $\Rgraph{G}{\verlist}$ with its corresponding $\numk$-list coloring of $G$ if it is clear from the context. 


	Given a graph $G$ with a list $L$ and two $\numk$-list colorings $\mapf_{\ini}$ and $\mapf_{\tar}$ of $G$, the {\sc $\numk$-list coloring reconfiguration} problem asks whether the reconfiguration graph $\Rgraph{G}{\verlist}$ has a path between the two nodes $\mapf_{\ini}$ and $\mapf_{\tar}$.

\section{PSPACE-completeness}

	In this section, we first prove that the problem is PSPACE-complete even for graphs with pathwidth two. 
	Then, we show in Section~\ref{subsec:threshold} that the reduction can be extended to proving the PSPACE-completeness of threshold graphs. 
	\medskip
	
	A graph is {\em series-parallel} if it does not contain a subdivision of a complete graph $K_4$ on four vertices~\cite{BLS99}. 
	Note that series-parallel graphs may have super-constant pathwidth, although their treewidth can be bounded by two. 
	We give the following theorem. 
	\begin{theorem}
	\label{the:pspace_sp}
	The {\sc $\numk$-list coloring reconfiguration} problem is {\rm PSPACE}-complete even for bipartite series-parallel graphs of pathwidth two.
	\end{theorem}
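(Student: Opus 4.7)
The plan is to prove the theorem in two parts: first, PSPACE membership by a standard argument; second, PSPACE-hardness by a polynomial-time reduction that produces an instance which is simultaneously bipartite, series-parallel, and of pathwidth two.

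For membership, I would observe that any $\numk$-list coloring of $G$ can be encoded in $O(|V(G)| \log \numk)$ bits and that the reconfiguration graph $\Rgraph{G}{\verlist}$ contains at most $\numk^{|V(G)|}$ nodes, so a shortest reconfiguration sequence (if one exists) has length at most $\numk^{|V(G)|}$. A nondeterministic procedure can therefore guess the sequence one recoloring step at a time, storing only the current coloring and a binary step counter in polynomial space; Savitch's theorem then yields PSPACE membership.

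For hardness, I would reduce from a known PSPACE-complete reconfiguration problem, such as the restriction of \lcr{} to bipartite planar graphs due to Bonsma and Cereceda~\cite{BC09}, or from Nondeterministic Constraint Logic on planar bounded-degree graphs. The reduction replaces every vertex of the source instance with a small ``vertex gadget'' whose list colorings encode the possible states of that vertex, and every edge with an ``edge gadget'' that propagates the adjacency constraint; the lists are chosen so that legal color changes in the constructed graph project to legal reconfiguration steps of the source instance and vice versa. To achieve pathwidth two while remaining bipartite and series-parallel, the gadgets are strung along a single spine in a linear fashion, so that any cross-section of the resulting graph intersects only a constant number of gadget components; bipartiteness and the series-parallel property are maintained by building each gadget from small cycles and paths equipped with two-colorable list structures.

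The main obstacle is devising gadgets that satisfy all three structural constraints simultaneously. Treewidth two comes essentially for free from series-parallelism, but pathwidth two is much more restrictive, since series-parallel graphs can have arbitrarily large pathwidth; thus any fan-out or crossing in the source instance must be serialized through a chain of bounded-width gadgets rather than realized via tree-like branching. The correctness analysis will then need to verify (i) that from any $\numk$-list coloring of the constructed graph one can unambiguously decode a configuration of the source instance, and (ii) that every single-vertex recoloring in the constructed instance either corresponds to a legal source reconfiguration step or is an internal gadget move that does not affect the decoded configuration; the second direction --- lifting each source step to a polynomial-length sequence of recoloring moves inside the gadgets --- is the more delicate part, and it is where the careful design of lists and of the linear layout will do most of the work.
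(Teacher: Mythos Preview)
Your membership argument is fine and essentially matches what the paper cites.

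The hardness plan, however, has a genuine gap: the choice of source problem. You propose reducing from \lcr{} on bipartite planar graphs (or from NCL), and then ``serializing'' the instance along a spine to force pathwidth two. But planar graphs --- and planar NCL instances --- can have arbitrarily large pathwidth, and you give no mechanism for compressing an arbitrary planar adjacency structure into a width-two strip. Your sentence ``any fan-out or crossing in the source instance must be serialized through a chain of bounded-width gadgets'' names the obstacle without resolving it; designing such gadgets for arbitrary planar fan-out is not a detail, it is the whole theorem. There is no reason to expect that edge gadgets encoding constraints between far-apart spine positions can be routed through a width-two decomposition while remaining series-parallel and bipartite.

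The paper sidesteps this entirely by choosing a source problem that is already linear: \textsc{shortest path rerouting}. An instance of that problem decomposes naturally into layers $D_0,\ldots,D_d$ (vertices at fixed distance from $s$), and a shortest path picks one vertex per layer. The reduction creates one ``layer vertex'' $u_i$ per layer with $\verlist(u_i)$ in bijection with $D_i$, and for each \emph{non}-edge between consecutive layers a degree-two ``forbidden vertex'' joined to $u_i$ and $u_{i+1}$ whose two-element list rules out that non-edge. The resulting graph is a path $u_1,\ldots,u_{d-1}$ with pendant paths of length two between consecutive $u_i$'s --- automatically bipartite, series-parallel, and of pathwidth two, with no serialization tricks required. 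The correctness argument is then a direct bijection between recoloring a layer vertex and rerouting one layer of the shortest path. The moral is that the structural constraints are obtained by picking the right PSPACE-complete source, not by gadget engineering on a generic one.
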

	
	It is known that {\sc $\numk$-list coloring reconfiguration} is in PSPACE~\cite{BC09}.
	Therefore, as a proof of Theorem~\ref{the:pspace_sp}, we give a polynomial-time reduction from the {\sc shortest path rerouting} problem~\cite{Bon13} (defined in Section~\ref{subsec:shortest}) to our problem for bipartite series-parallel graphs of pathwidth two.

\subsection{Reconfiguration problem for shortest path}
\label{subsec:shortest}
	
	Let $\oriG$ be an unweighted graph, and let $\ters$ and $\tert$ be two vertices in $\oriG$. 
	We call a shortest path in $\oriG$ between $\ters$ and $\tert$ simply an {\em S-path} in $\oriG$. 
	Note that, since $\oriG$ is unweighted, an S-path indicates a path between $\ters$ and $\tert$ having the minimum number of edges. 
	We say that two S-paths $\Path$ and $\Path^\prime$ in $\oriG$ are {\em adjacent} if they differ in exactly one vertex, that is, $|V(\Path^\prime) \setminus V(\Path)| = 1$ and $|V(\Path) \setminus V(\Path^\prime)| = 1$ hold. 
	Given two S-paths $\Path_{\ini}$ and $\Path_{\tar}$ in $\oriG$, the {\sc shortest path rerouting} problem asks whether there exists a sequence $\seq{P}=\hseq{\Path_{\ini}, \Path_1, \ldots, \Path_{\ell}}$ of S-paths such that $\Path_{\ell} = \Path_{\tar}$, and $\Path_{i-1}$ and $\Path_{i}$ are adjacent for each $i = 1, 2, \ldots, \ell$. 
	This problem is known to be PSPACE-complete~\cite{Bon13}. 
	

	To construct our reduction, we introduce some terms.
	Let $\minlen$ be the number of edges of an S-path in a graph $\oriG$. 
	For two vertices $v$ and $w$ in $\oriG$, we denote by $\dist{v}{w}$ the number of edges of a shortest path in $\oriG$ between $v$ and $w$;
then $\dist{\ters}{\tert} = \minlen$. 
	For each $i \in \hset{0, 1, \ldots, \minlen}$, the \textit{layer $D_{\ters,i}$ from $\ters$} is defined to be the set of all vertices in $\oriG$ that are placed at distance $i$ from $\ters$, that is, 
$D_{\ters, i} = \hset{ v \in V(\oriG) : \dist{\ters}{v} = i }$. 
	Similarly, let $D_{\tert,j} = \hset{ v \in V(\oriG) : \dist{\tert}{v} = j }$ for each $j \in \hset{0, 1, \ldots, \minlen}$.
	Then, for each $i \in \hset{0, 1, \ldots, \minlen}$, the {\em layer $\layer{i}$ of $\oriG$} is defined as follows: 
$\layer{i} = D_{\ters,i} \cap D_{\tert, \minlen - i}$. 
	Notice that any S-path in $\oriG$ contains exactly one vertex from each layer $\layer{i}$, $0 \le i \le \minlen$. 
	Then, observe that two S-paths $\Path$ and $\Path^\prime$ in $\oriG$ are adjacent if and only if there exists exactly one index $j \in \hset{1, 2, \ldots, \minlen}$ such that $V(\Path) \cap \layer{j} \neq V(\Path^\prime) \cap \layer{j}$.
	Therefore, we may assume without loss of generality that $\oriG$ consists of only vertices in $\bigcup_{0 \le i \le \minlen} \layer{i}$, as illustrated in \figurename~\ref{fig:reductionSP}(a). 
	In the example of \figurename~\ref{fig:reductionSP}(a), all vertices in the same layer $\layer{i}$ are depicted by the same shape, that is, $\layer{0} = \{ \ters \}$, $\layer{1} = \{ v_{1,1}, v_{1,2} \}$, $\layer{2} = \{ v_{2,1}, v_{2,2}, v_{2,3} \}$, $\layer{3} = \{ v_{3,1}, v_{3,2} \}$ and $\layer{4} = \{ \tert \}$. 
	Note that both $\layer{0} = \{\ters\}$ and $\layer{\minlen} = \{\tert\}$ always hold.

\subsection{Reduction}
\label{subsec:reduction}



	Given an instance $(\oriG,\Path_{\ini},\Path_{\tar})$ of {\sc shortest path rerouting}, we construct the corresponding instance $(G,L,\mapf_{\ini},\mapf_{\tar})$ of {\sc $\numk$-list coloring reconfiguration}.
\medskip

\begin{figure}[t]
\begin{center}
\includegraphics[width=0.85\textwidth]{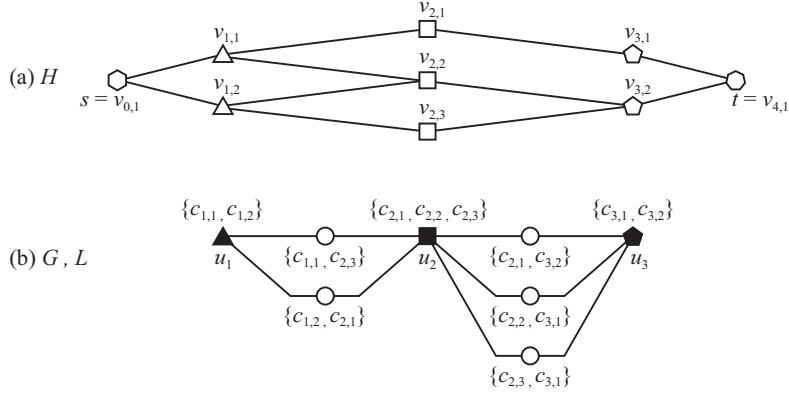}
\end{center}
\vspace{-2em}
\caption{(a) Graph $\oriG$ for {\sc shortest path rerouting}, and (b) the corresponding graph $G$ and its list $\verlist$ for {\sc $\numk$-list coloring reconfiguration}.}
\label{fig:reductionSP}
\vspace{-1em}
\end{figure}

\noindent
	{\bf Construction of $G$ and $\verlist$.}
	
	We first construct the corresponding graph $G$ with a list $L$. 
	For each $i \in \hset{1,2, \ldots, \minlen-1}$, let $\layer{i} = \{ v_{i,1}, v_{i,2}, \ldots, v_{i,q} \}$; and we introduce a vertex $u_i$, called a \textit{layer vertex}, to $G$. 
	The list of each layer vertex $u_i$ is defined as $\verlist(u_i) = \{ c_{i,1}, c_{i,2}, \ldots, c_{i,q} \}$, where each color $c_{i,j}$ in $\verlist(u_i)$ corresponds to the vertex $v_{i,j}$ in $\layer{i}$;
assigning color $c_{i,j}$ represents selecting the vertex $v_{i,j}$ as an S-path in $\oriG$. 
	We denote by $\Vlay$ the set of all layer vertices $u_1, u_2, \ldots, u_{\minlen-1}$ in $G$.
	In \figurename~\ref{fig:reductionSP}(b), each layer vertex $u_i$ is illustrated as a black vertex which is the same shape as the vertices in $\layer{i}$. 
	
	We then connect layer vertices in $G$ by {\em forbidden paths} of length two, as follows. 
	Let $v_{i,x} \in \layer{i}$ and $v_{i+1,y} \in \layer{i+1}$ be an arbitrary pair of vertices in $\oriG$ such that $v_{i,x} v_{i+1,y} \not\in E(\oriG)$.
	We introduce a vertex $w$ to $G$, and join $w$ and each of the two layer vertices $u_i$ and $u_{i+1}$ by an edge.
	The list $\verlist(w)$ of $w$ consists of two colors $c_{i,x}$ and $c_{i+1,y}$ which correspond to the vertices $v_{i,x}$ and $v_{i+1,y}$ in $\oriG$, respectively.
(See the vertices depicted by white circles in \figurename~\ref{fig:reductionSP}(b).)
	We call such a vertex $w$ in $G$ a $(v_{i,x}, v_{i+1,y})$\textit{-forbidden vertex} or simply a \textit{forbidden vertex}.
	Notice that there is no proper $\numk$-list coloring $\mapf$ such that $\mapf(u_i) = c_{i,x}$ and $\mapf(u_{i+1}) = c_{i+1,y}$;
otherwise there is no color in $\verlist(w)$ that can be assigned to $w$.
	This property ensures that every $\numk$-list coloring of $G$ corresponds to an S-path in $\oriG$. 
	Let $\Vfor$ be the set of all forbidden vertices, then $|\Vfor| = O(|E(\bar{\oriG})|)$ where $\bar{\oriG}$ is the complement graph of $\oriG$. 

	This completes the construction of $G$ and $\verlist$.
	Since $|V(G)| = O(d+|E(\bar{\oriG})|) = O(|V(\oriG)|^2)$ and $|E(G)| = O(|E(\bar{\oriG})|) = O(|V(\oriG)|^2)$, we can construct $G$ in polynomial time.
	Clearly, $G$ is a bipartite series-parallel graph of pathwidth two, whose bipartition consists of $\Vlay$ and $\Vfor$. 
\medskip

\noindent
	{\bf Construction of $\mapf_{\ini}$ and $\mapf_{\tar}$.}
	
	We now construct two $\numk$-list colorings $\mapf_{\ini}$ and $\mapf_{\tar}$ of $G$ which correspond to S-paths $\Path_{\ini}$ and $\Path_{\tar}$ in $\oriG$, respectively.
	For each $i \in \{ 1,2,\ldots, \minlen-1 \}$, let $v_{i,\ini}$ be the vertex in the layer $\layer{i}$ passed through by the S-path $\Path_{\ini}$;
then we let $\mapf_{\ini}(u_i) = c_{i,\ini}$ for each layer vertex $u_i \in \Vlay$. 
	For each $(v_{i,x}, v_{i+1,y})$-forbidden vertex $w \in \Vfor$, we choose an arbitrary color from $c_{i,x}$ and $c_{i+1,y}$ which is assigned to neither $u_i$ nor $u_{i+1}$. 
	We note that such an available color always exists, 
because 
$\Path_{\ini}$ has an edge between the two vertices corresponding to the colors $\mapf_{\ini}(u_i)$ and $\mapf_{\ini}(u_{i+1})$, and hence at least one of $\mapf_{\ini}(u_i) \neq c_{i,x}$ and $\mapf_{\ini}(u_{i+1}) \neq c_{i+1,y}$ holds for each $(v_{i,x}, v_{i+1,y})$-forbidden vertex. 
	Similarly, we construct $\mapf_{\tar}$. 
	This completes the construction of the corresponding instance $(G,L,\mapf_{\ini},\mapf_{\tar})$.
\medskip

\noindent
	{\bf Correctness of the reduction.}
	
	To show the correctness of this reduction, we give the following lemma.
	\begin{lemma} \label{lem:reduction}
	$(\oriG,\Path_{\ini},\Path_{\tar})$ is a yes-instance if and only if $(G,L,\mapf_{\ini},\mapf_{\tar})$ is a yes-instance.
%
	\end{lemma}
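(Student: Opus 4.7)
The plan is to introduce a projection $\phi$ from $\numk$-list colorings of $G$ to S-paths in $\oriG$ and to use it to translate reconfiguration sequences in both directions. For a $\numk$-list coloring $\mapf$ of $G$, let $\phi(\mapf) := \hseq{\ters, v_{1,j_1}, \ldots, v_{\minlen-1, j_{\minlen-1}}, \tert}$, where $\mapf(u_i) = c_{i, j_i}$ for each layer vertex $u_i \in \Vlay$. I would first check that $\phi(\mapf)$ is indeed an S-path: by the definition of the layers, every vertex of $\layer{1}$ is adjacent to $\ters$ and every vertex of $\layer{\minlen-1}$ is adjacent to $\tert$, while a consecutive pair $v_{i,j_i}v_{i+1,j_{i+1}}$ cannot be a non-edge of $\oriG$, for otherwise the associated $(v_{i,j_i}, v_{i+1,j_{i+1}})$-forbidden vertex in $G$ would have no color available in its list. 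By construction, $\phi(\mapf_{\ini}) = \Path_{\ini}$ and $\phi(\mapf_{\tar}) = \Path_{\tar}$.

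For the direction ``$(G, \verlist, \mapf_{\ini}, \mapf_{\tar})$ yes $\Rightarrow$ $(\oriG, \Path_{\ini}, \Path_{\tar})$ yes,'' given a reconfiguration sequence $\mapf_{\ini} = \mapf_0, \mapf_1, \ldots, \mapf_m = \mapf_{\tar}$ in $\Rgraph{G}{\verlist}$, I would apply $\phi$ pointwise and then contract consecutive duplicates. Two successive colorings $\mapf_{i-1}$ and $\mapf_i$ differ in a single vertex $v$: if $v \in \Vfor$, then $\phi(\mapf_{i-1}) = \phi(\mapf_i)$; if $v = u_j \in \Vlay$, then $\phi(\mapf_{i-1})$ and $\phi(\mapf_i)$ differ in exactly the vertex lying on $\layer{j}$, hence they are adjacent S-paths. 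After removing consecutive duplicates, the result is an S-path reconfiguration from $\Path_{\ini}$ to $\Path_{\tar}$.

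The main obstacle is the converse direction, because a single swap on an S-path may have to be carried out by several recoloring steps in $G$ in order to first move the affected forbidden vertices out of the way. Given a sequence $\Path_0, \Path_1, \ldots, \Path_{\ell}$ with $\Path_0 = \Path_{\ini}$ and $\Path_{\ell} = \Path_{\tar}$, I would simulate each transition $\Path_{i-1} \to \Path_i$ (swapping $v_{j,x}$ for $v_{j,x'}$ on some layer $\layer{j}$) by the following block of steps: first, for every forbidden vertex $w$ incident to $u_j$ whose list contains the new color $c_{j,x'}$, if $w$ is currently assigned $c_{j,x'}$, flip it to the other color of $\verlist(w)$; then recolor $u_j$ from $c_{j,x}$ to $c_{j,x'}$. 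The key feasibility check is that each flip is legal: if $w$ is a $(v_{j,x'}, v_{j+1,y})$-forbidden vertex, then the other color $c_{j+1, y}$ differs from the current $\mapf(u_{j+1}) = c_{j+1, y'}$, because $\Path_i$ is an S-path containing the edge $v_{j, x'}v_{j+1, y'}$ whereas $v_{j, x'}v_{j+1, y} \notin E(\oriG)$ by the choice of $w$, forcing $y' \neq y$; the $\layer{j-1}$-side is symmetric. After processing the whole S-path sequence, the layer-vertex assignments agree with $\mapf_{\tar}$, and one final pass through $\Vfor$ flips each forbidden vertex $w$ to its target color $\mapf_{\tar}(w)$ in a single step, which is always feasible since with the layer vertices fixed both the current color and $\mapf_{\tar}(w)$ are legal choices from $\verlist(w)$.
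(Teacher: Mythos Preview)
Your proof is correct and follows essentially the same approach as the paper's: the if-direction is handled by projecting the recoloring sequence onto layer vertices and contracting duplicates, and the only-if-direction simulates each rerouting step by first flipping the obstructing forbidden vertices (using the same non-edge argument) and then recoloring the layer vertex. Your explicit final pass to align the forbidden vertices with $\mapf_{\tar}$ is a detail the paper's proof leaves implicit.
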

	\begin{proof}
	We first prove the if-part. 
	Suppose that the reconfiguration graph $\Rgraph{G}{\verlist}$ has a path between the two nodes $\mapf_{\ini}$ and $\mapf_{\tar}$. 
	We can classify the recoloring steps into the following two types:
(1) recoloring a layer vertex in $\Vlay$, and (2) recoloring a forbidden vertex in $\Vfor$. 
	Therefore, the path in $\Rgraph{G}{\verlist}$ can be divided into sub-paths, intermittently at each edge corresponding to a recoloring step of type~(1) above;
all edges in a sub-path correspond to recoloring steps of type~(2) above. 
	Therefore, all nodes in each sub-path correspond to the same S-path in $\oriG$.
	Furthermore, any two consecutive sub-paths correspond to two adjacent S-paths (that differ in only one vertex), because the sub-paths are divided by a recoloring step of type~(1).
	Thus, we can construct a sequence of adjacent S-paths that transforms $\Path_\ini$ into $\Path_\tar$.
	
	We then prove the only-if-part.
	Suppose that there exists a sequence $\seq{P}=\hseq{\Path_{\ini}, \Path_1, \ldots, \Path_{\ell}}$ of S-paths such that $\Path_{\ell} = \Path_{\tar}$, and $\Path_{j-1}$ and $\Path_{j}$ are adjacent for each $j = 1, 2, \ldots, \ell$.
	For two adjacent S-paths $\Path_{j-1}$ and $\Path_j$, $j \in \{1,2,\ldots, \ell\}$, assume that $\Path_{j}$ is obtained from $\Path_{j-1}$ by replacing $v_{i,x}\in V(\Path_{j-1})\cap \layer{i}$ with $v_{i,y}\in V(\Path_j) \cap \layer{i}$. 
(See \figurename~\ref{fig:recoloring}(a).)
	This rerouting step from $v_{i,x}$ to $v_{i,y}$ corresponds to recoloring the layer vertex $u_i \in \Vlay$ from the color $c_{i,x}$ to $c_{i,y}$.
(See \figurename~\ref{fig:recoloring}(b).)
	To do so, if there is a forbidden vertex $w \in \Vfor$ which is adjacent with $u_i$ and receives the color $c_{i,y}$, we first need to recolor $w$ from $c_{i,y}$ to another color $c_{q,z} \in \verlist(w) \setminus \{ c_{i,y} \}$, where $q \in \{i-1, i+1\}$.
 (Note that, since $\Path_{j-1}$ corresponds to a feasible $\numk$-list coloring of $G$, no forbidden vertex adjacent with $u_i$ receives the color $c_{i,x}$.)
	Since $w$ is a forbidden vertex, we know that $|\verlist(w)| = 2$.
	Let $v_{q,a}$ be the vertex in the layer $\layer{q}$ which is adjacent with $v_{i,x}$ and $v_{i,y}$ in $\Path_{j-1}$ and $\Path_j$, respectively;
and hence the layer vertex $u_{q}$ receives the color $c_{q,a}$.
(Figure~\ref{fig:recoloring} illustrates the case where $q=i-1$.)
	Then, since each forbidden vertex is placed only for a pair of vertices in $\oriG$ which is {\em not} joined by an edge, we observe that the color $c_{q,z} \in \verlist(w) \setminus \{ c_{i,y} \}$ is different from $c_{q,a}$. 
 	Therefore, we can recolor $w$ from $c_{i,y}$ to another color $c_{q,z}$ without recoloring any other vertices.
	Since $\Vfor$ forms an independent set of $G$, we can apply this recoloring step to all forbidden vertices independently. 
	Now, any neighbor of $u_i$ is colored with neither $c_{i,x}$ nor $c_{i,y}$, and hence we can recolor $u_i$ from $c_{i,x}$ to $c_{i,y}$.
	Thus, $\Rgraph{G}{\verlist}$ has a path between $\mapf_\ini$ and $\mapf_\tar$ which corresponds to $\seq{P}$.
	\qed
	\end{proof}
	
\begin{figure}[t]
\begin{center}
\includegraphics[width=\textwidth]{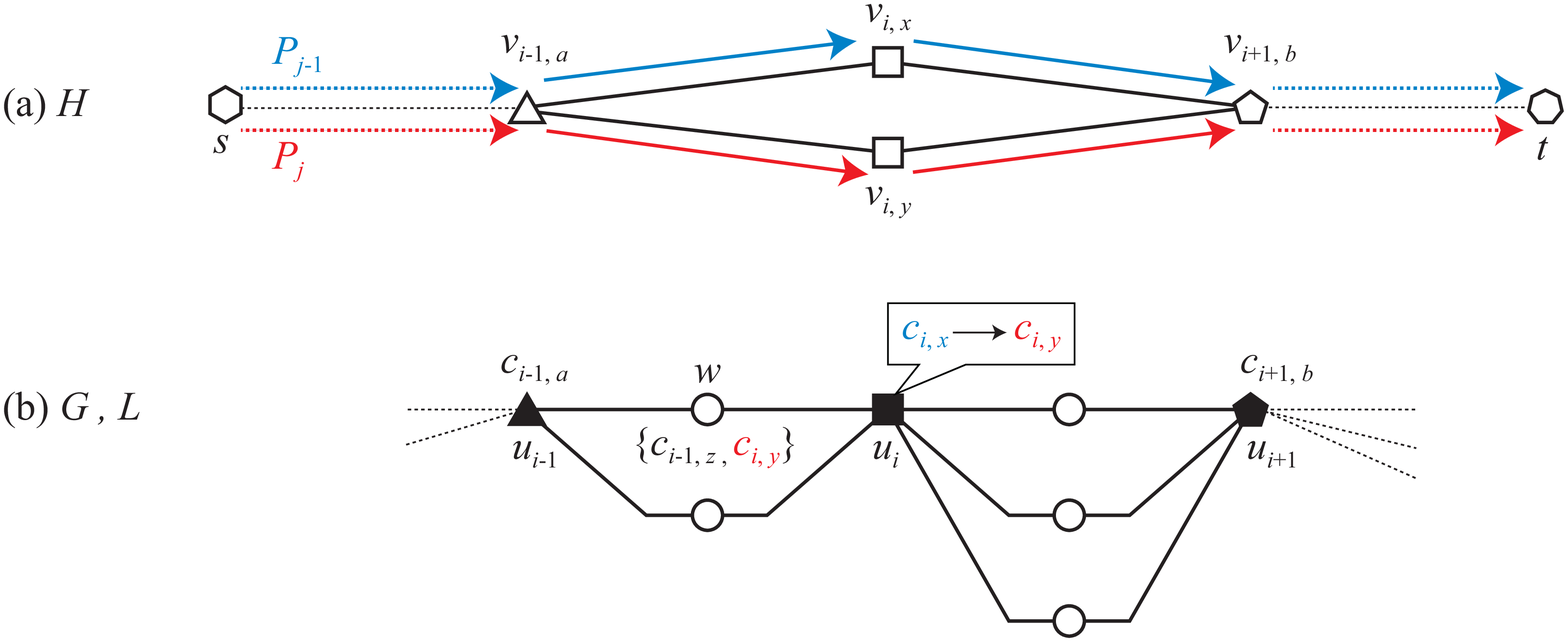}
\end{center}
\vspace{-2em}
\caption{(a) Two adjacent S-paths $\Path_{j-1}$ (blue) and $\Path_j$ (red), and (b) the corresponding recoloring steps.}
\label{fig:recoloring}
\vspace{-1em}
\end{figure}

\subsection{Threshold graphs}
\label{subsec:threshold}
	
	In this subsection, we extend our reduction in Section~\ref{subsec:reduction} to threshold graphs. 
	A graph $G$ is {\em threshold} if there exist a real number $s$ and a mapping $\omomi: V(G) \to \RRR$ such that $xy \in E(G)$ if and only if $\omomi(x) + \omomi(y) \ge s$~\cite{BLS99}, where $\RRR$ is the set of all real numbers. 
	\begin{theorem}
	\label{the:pspace_thre}
	The {\sc $\numk$-list coloring reconfiguration} problem is {\rm PSPACE}-complete even for threshold graphs.
	\end{theorem}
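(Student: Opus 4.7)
The plan is to modify the reduction from Section~\ref{subsec:reduction} so that the output graph becomes threshold, while the set of $\numk$-list colorings (and hence the reconfiguration graph) is unchanged. Starting from the instance $(G, \verlist, \mapf_{\ini}, \mapf_{\tar})$ constructed there, I would define $G^\prime$ by augmenting $E(G)$ in two ways: add every missing edge between pairs of distinct layer vertices so that $\Vlay$ becomes a clique, and for each forbidden vertex $w \in \Vfor$ add every missing edge between $w$ and a layer vertex, so that in $G^\prime$ every forbidden vertex is adjacent to all of $\Vlay$. The list $\verlist$ and the two colorings $\mapf_{\ini}, \mapf_{\tar}$ are kept unchanged.

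To see that $G^\prime$ is threshold, I would use the explicit weight function $\omomi(u) = 2$ for $u \in \Vlay$ and $\omomi(w) = 1$ for $w \in \Vfor$, with threshold $s = 3$: two layer vertices give weight $4 \ge 3$ (edge), a layer vertex and a forbidden vertex give $3 \ge 3$ (edge), and two forbidden vertices give $2 < 3$ (no edge), which exactly matches $E(G^\prime)$. Equivalently, $G^\prime$ can be built by first adding the forbidden vertices one by one as isolated vertices and then adding the layer vertices one by one as dominating vertices, a standard construction of a threshold graph.

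The correctness of the reduction rests on the observation that the new edges impose no additional list coloring constraints, thanks to the layer-indexed structure of the colors. The lists $\verlist(u_1), \ldots, \verlist(u_{\minlen - 1})$ are pairwise disjoint, because each color $c_{i,j}$ carries its layer index $i$; hence edges added inside $\Vlay$ are irrelevant to proper colorability. For a $(v_{i,x}, v_{i+1,y})$-forbidden vertex $w$, the list $\verlist(w) = \{c_{i,x}, c_{i+1,y}\}$ lies in $\verlist(u_i) \cup \verlist(u_{i+1})$ and is therefore disjoint from $\verlist(u_k)$ for every $k \notin \{i, i+1\}$, so the added $w u_k$ edges are likewise harmless. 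In particular, $\mapf_{\ini}$ and $\mapf_{\tar}$ remain valid $\numk$-list colorings on $G^\prime$, and $\Rgraph{G^\prime}{\verlist}$ coincides with $\Rgraph{G}{\verlist}$. Combining this with Lemma~\ref{lem:reduction} yields the theorem. The only point requiring care is this disjointness of lists across layers; once it is made explicit, no further obstacle remains.
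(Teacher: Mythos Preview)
Your proposal is correct and mirrors the paper's own proof almost exactly: the paper also augments $G$ by making $\Vlay$ a clique and joining every forbidden vertex to all of $\Vlay$, argues that the new edges join only vertices whose lists are disjoint so the reconfiguration graph is unchanged, and then invokes Lemma~\ref{lem:reduction}. The only cosmetic difference is the choice of threshold weights (the paper uses $\omomi=1$ on $\Vlay$, $\omomi=0$ on $\Vfor$, $s=1$), which is equivalent to yours.
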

	\begin{proof}
	We modify the graph $G$ constructed in Section~\ref{subsec:reduction}, as follows: 
join all pairs of vertices in $\Vlay$, and join each vertex in $\Vfor$ with all vertices in $\Vlay$. 
	Let $G^\prime$ be the resulting graph, then $G[\Vlay]$ forms a clique and $G[\Vfor]$ forms an independent set. 
	Notice that $G^\prime$ is a threshold graph;
set the threshold $s = 1$, and the mapping $\omomi(v) = 1$ for each layer vertex $v \in \Vlay$ and $\omomi(u) = 0$ for each forbidden vertex $u \in \Vfor$.

	Consider any pair of vertices $u$ and $v$ such that $uv \in E(G^\prime) \setminus E(G)$, that is, they are joined by a new edge for constructing $G^\prime$ from $G$. 
	Then, by the construction in Section~\ref{subsec:reduction}, the two lists $\verlist(u)$ and $\verlist(v)$ contain no color in common. 
	Therefore, adding new edges to $G$ does not affect the existence of $\numk$-list coloring:
more formally, any $\numk$-list coloring of $G$ is a $\numk$-list coloring of $G^\prime$, and vice versa. 
	Thus, by Lemma~\ref{lem:reduction}, an instance $(\oriG,\Path_{\ini},\Path_{\tar})$ of {\sc shortest path rerouting} is a yes-instance if and only if $(G^\prime,L,\mapf_{\ini},\mapf_{\tar})$ is a yes-instance.
	\qed
	\end{proof}

	\section{Algorithm for Graphs with Pathwidth One}
	\label{sec:algorithm}

	In contrast to Theorem~\ref{the:pspace_sp}, we give the following theorem in this section. 
	\begin{theorem}
	\label{the:caterpillar}
	The {\sc $\numk$-list coloring reconfiguration} problem can be solved in polynomial time for graphs with pathwidth one.
	\end{theorem}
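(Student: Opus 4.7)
The plan is to exploit the fact that every connected graph of pathwidth one is a \emph{caterpillar}, i.e., a tree whose spine (the path that remains after deleting all degree-one vertices) is $V_S=\{w_1,\ldots,w_m\}$ and whose remaining vertices $V_L$ are leaves, each adjacent to a unique spine vertex. I begin with preprocessing: for every leaf $v$ with $|L(v)|=1$, say $L(v)=\{d\}$, the color of $v$ is forced to $d$ in every list coloring, so I delete $d$ from the list of $v$'s spine neighbor and declare a no-instance if this invalidates $\mapf_{\ini}$ or $\mapf_{\tar}$. After this step every leaf satisfies $|L(v)|\ge 2$, and I further shrink the list of each spine vertex $w_i$ to those colors $c$ admitting a valid coloring of all leaves of $w_i$ (i.e., such that $L(v)\setminus\{c\}\neq\emptyset$ for every such leaf $v$).

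Next I would exploit a \emph{leaf flexibility} principle: while the color of a spine vertex $w$ is held fixed at $c$, every leaf $v$ of $w$ can be independently assigned any color in $L(v)\setminus\{c\}$ and recolored among these choices in a single step. Hence, when one wants to recolor $w$ directly from $c$ to $c'$, the only obstruction coming from leaves is the existence of a leaf $v$ with $|L(v)|=2$ and $L(v)=\{c,c'\}$, since only then is $v$ pinned to $c'$ with no alternative. I would formalize this by associating, with each spine vertex $w_i$, an auxiliary transition graph $T_i$ on its reduced list whose edges $\{c,c'\}$ are exactly those pairs with no leaf $v$ of $w_i$ satisfying $L(v)\subseteq\{c,c'\}$. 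Locally feasible direct recolorings of $w_i$ correspond to edges of $T_i$, and indirect recolorings correspond to walks in $T_i$.

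With these structural facts in hand, I would reduce the whole reconfiguration problem to a reachability question on the spine, solved by a dynamic program that processes $w_1,w_2,\ldots,w_m$ in order. At each spine vertex the DP carries a polynomial-size table indexed by tuples of colors at the current interface, recording which spine-color transitions can be realized consistently with the reconfigurations already performed and subject to properness of every spine edge $w_iw_{i+1}$. Because each list contains at most $\numk\le|V(G)|$ colors, the table and its transitions have polynomial size in $|V(G)|+\numk$, yielding a polynomial-time algorithm overall.

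The principal obstacle I expect is establishing a locality lemma: any individual recoloring of $w_i$ that requires a detour through $T_i$ should propagate to only a bounded stretch of the spine, so that the DP state can be kept of polynomial size. Formulating this locality statement precisely, and verifying that the DP correctly composes the local transitions across the spine edges while respecting both the leaf obstructions captured by each $T_i$ and the proper-coloring constraints between adjacent spine vertices, will be the main technical effort needed to complete the proof.
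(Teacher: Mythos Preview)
Your proposal has the right skeleton---caterpillars, leaf preprocessing, and a spine-based dynamic program---but there is a genuine gap at the point you yourself flag. The ``locality lemma'' you hope for, namely that a detour needed to recolor one spine vertex $w_i$ propagates only a bounded distance along the spine, is not true in general and is not how the paper proceeds. It is easy to construct caterpillars in which recoloring $w_1$ forces a cascade of recolorings all the way to $w_m$: take, e.g., spine lists of size three with leaves whose two-element lists block all but one immediate transition at each spine vertex, so that freeing a color at $w_i$ requires first freeing a color at $w_{i+1}$, and so on. Thus bounded locality cannot be the mechanism that keeps your DP state polynomial.

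What the paper does instead is define the DP state precisely and then prove, by a separate counting argument, that it stays small. For each prefix $G_i$ the state is the \emph{encoding graph} $H_i$: the quotient of the connected component of the reconfiguration graph of $G_i$ containing $\mapf_{\ini}[V_i]$ by the equivalence relation ``reachable without recoloring the current last spine vertex $\spine{i}$''. Two equivalence classes are joined in $H_i$ iff some pair of representatives is adjacent in the reconfiguration graph. The non-obvious lemma is that $|V(H_i)| \le |V(H_{i-1})| + \degG(v_i)$, so $|V(H_n)| = O(n)$; this is proved by passing to a spanning tree of $H_{i-1}$ and counting how many components arise when one deletes all e-nodes of a given color, summed over colors in $L(v_i)$. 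Your transition graphs $T_i$ capture only the leaf obstructions local to $w_i$ and do not encode the global connectivity information that $H_i$ carries; your vague ``table indexed by tuples of colors at the current interface'' never gets pinned down to something whose size you can bound. To complete a proof along your lines you would need to replace the locality hope with a concrete state definition equivalent to $H_i$ and then supply the size bound---which is exactly the paper's content.
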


	As a proof of Theorem~\ref{the:caterpillar}, we give such an algorithm.
	However, since every connected graph of pathwidth one is a caterpillar~\cite{PT99}, it suffices to develop a polynomial-time algorithm 
for caterpillars.

	A \textit{caterpillar} $G$ is a tree whose vertex set $V(G)$ can be partitioned into two subset $\Vspine$ and $\Vleaf$ such that $G[\Vspine]$ forms a path 
and each vertex in $\Vleaf$ is incident to exactly one vertex in $\Vspine$.
	We may assume without loss of generality that the two endpoints of the path $G[\Vspine]$ are of degree one in the whole graph $G$. 
(See $v_1$ and $v_{10}$ in \figurename~\ref{fig:caterpillar}.)
	We call each vertex in $\Vspine$ a \textit{spine vertex} of $G$, and each vertex in $\Vleaf$ a \textit{leaf} of $G$.

\begin{figure}[b]
\begin{center}
\includegraphics[width=0.5\textwidth]{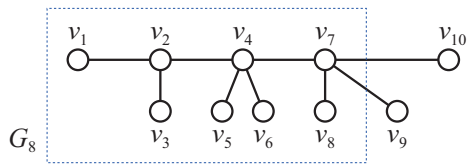}
\end{center}
\vspace{-2em}
\caption{A caterpillar $G$ and its vertex ordering, where the subgraph surrounded by a dotted rectangle corresponds to $G_8$.}
\label{fig:caterpillar}
\end{figure}

	We assume that all vertices in $G$ are ordered as $v_1, v_2, \ldots, v_n$ by the breadth-first search starting with the endpoint (degree-$1$ vertex) of the spine path $G[\Vspine]$ with the priority to leaves;
that is, when we visit a spine vertex $v$, we first visit all leaves of $v$ and then visit the unvisited spine vertex.
(See \figurename~\ref{fig:caterpillar} for example.)
	For each index $i \in \hset{1,2,\ldots, n}$, we let $V_i = \hset{v_1, v_2, \ldots, v_i}$ and $G_i = G[V_i]$.
	Then, clearly $G_n=G$.	
	For each index $i \in \hset{1,2,\ldots, n}$, let $\spine{i}$ be the latest spine vertex in $V_i$, that is, $\spine{i}=v_i$ if $v_i$ is a spine vertex, otherwise $\spine{i}$ is the unique neighbor of $v_i$.
	Then, $v_{i}$ is adjacent with only the spine vertex $\spine{i-1}$ in $G_{i}$ for each $i \in \{2,3,\ldots, n\}$.  
\medskip

	We can restrict the length of each list without loss of generality, as follows.
	Note that the following lemma holds for any graph. 
	\begin{lemma} \label{lem:transform}
	For an instance $(G^\prime, \verlist^\prime, \mapf^\prime_{\ini}, \mapf^\prime_{\tar})$, one can obtain another instance $(G, \verlist, \mapf_{\ini}, \mapf_{\tar})$ in polynomial time such that $2 \le |\verlist(v)| \le \degG(v)+1$ for each vertex $v \in V(G)$, and $(G^\prime, \verlist^\prime, \mapf^\prime_{\ini}, \mapf^\prime_{\tar})$ is a yes-instance if and only if $(G, \verlist, \mapf_{\ini}, \mapf_{\tar})$ is a yes-instance. 
	\end{lemma}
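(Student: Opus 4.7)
The plan is to transform $(G',\verlist',\mapf'_{\ini},\mapf'_{\tar})$ into the desired instance via two kinds of polynomial-time, answer-preserving local modifications: one that enforces $|\verlist(v)| \ge 2$ and one that enforces $|\verlist(v)| \le \degG(v)+1$. Each individual modification strictly decreases either the number of vertices or the total list size, so the whole preprocessing runs in polynomial time.

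For the lower bound, we repeatedly process any vertex $v$ with $|\verlist'(v)| \le 1$. If $|\verlist'(v)|=0$, the instance is a trivial no-instance. If $\verlist'(v)=\{c\}$, then $v$ is forced to receive color $c$ in every list coloring; first check that $\mapf'_{\ini}(v)=\mapf'_{\tar}(v)=c$ (else it is a no-instance), then delete $v$ from $G'$ and remove $c$ from the list of every neighbor of $v$. This step preserves equivalence, since in any valid reconfiguration sequence $v$ keeps color $c$ throughout and so no neighbor can ever use $c$ regardless. Iterating until no such $v$ remains leaves every remaining vertex with $|\verlist'(v)|\ge 2$.

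For the upper bound, we repeatedly pick a vertex $v$ with $|\verlist'(v)|\ge \degG(v)+2$ together with a color $c\in \verlist'(v)\setminus\{\mapf'_{\ini}(v),\mapf'_{\tar}(v)\}$ (such a $c$ exists since $|\verlist'(v)|\ge 3$), and remove $c$ from $\verlist'(v)$. The backward direction of equivalence is immediate: any sequence valid for the trimmed list is also valid for the original. For the forward direction, given a reconfiguration sequence $\mapf_0,\ldots,\mapf_\ell$ in the original, we construct a sequence in the trimmed instance in which $v$ is never colored $c$, by substituting each occurrence $\mapf_t(v)=c$ with a valid alternative color from $\verlist'(v)\setminus\{c\}$ and inserting additional $v$-recoloring steps as needed so that consecutive configurations still differ in exactly one vertex. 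A valid alternative always exists because $|\verlist'(v)\setminus\{c\}|\ge \degG(v)+1$ exceeds the at most $\degG(v)$ colors used by $v$'s neighbors at any moment.

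The main obstacle is the careful bookkeeping in this forward construction. The tricky case arises when, in the original sequence, a neighbor of $v$ is about to be recolored to exactly $v$'s current substituted color: a detour step---recoloring $v$ to yet another color in $\verlist'(v)\setminus\{c\}$ that simultaneously avoids the neighbor's imminent color and all other current neighbor colors---must be inserted before the neighbor step is applied. Verifying that the slack $|\verlist'(v)|\ge \degG(v)+2$ always accommodates such detours, and that the substitutions chosen at successive moments can be stitched together into a legal single-vertex-change sequence, is the core technical work. Once this single-color removal is justified, iterating both preprocessings in tandem yields the desired $(G,\verlist,\mapf_{\ini},\mapf_{\tar})$ in polynomial time.
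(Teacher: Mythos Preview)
Your lower-bound modification (deleting a vertex whose list has size~$1$ and stripping its color from neighbors' lists) matches the paper and is correct.

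Your upper-bound modification, however, is \emph{genuinely wrong}. The paper does not trim $L(v)$; when $|L(v)|\ge \degG(v)+2$ it \emph{deletes $v$ from the graph entirely}. This is not merely a stylistic difference: removing a single color from a list of size exactly $\degG(v)+2$ can destroy reconfigurability, so your forward-direction simulation cannot be completed in general. Concretely, take the path $u_1\!-\!v\!-\!u_2$ with $L(u_1)=\{2,3\}$, $L(v)=\{1,2,3,4\}$, $L(u_2)=\{3,4\}$, and colorings $\mapf_{\ini}=(3,2,4)$, $\mapf_{\tar}=(2,3,4)$. Here $|L(v)|=4=\degG(v)+2$ and $c=1\notin\{\mapf_{\ini}(v),\mapf_{\tar}(v)\}$. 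In the original instance one reconfigures via $(3,2,4)\to(3,1,4)\to(2,1,4)\to(2,3,4)$. After deleting color~$1$ from $L(v)$, the coloring $(2,3,4)$ becomes \emph{frozen}: $u_1$ cannot move to $3$ (blocks $v$), $v$ cannot move to $2$ or $4$ (blocked by $u_1,u_2$), and $u_2$ cannot move to $3$ (blocks $v$). Hence the trimmed instance is a no-instance. Your detour argument breaks precisely here: with $v$'s substituted color $c^*$ and the neighbor colors exhausting $L(v)\setminus\{c,c^*\}$, the forbidden set for the detour color has size $\degG(v)+2=|L(v)|$, so no detour color exists.

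The paper's fix---deleting $v$---avoids this because when lifting a sequence from $G\setminus v$ back to $G$ you may use \emph{all} of $L(v)$, including $c$, for the intermediate colors of $v$; only one color beyond the $\le\degG(v)+1$ constraints (the $\degG(v)$ current neighbor colors plus the neighbor's imminent color) is ever needed, and $|L(v)|\ge \degG(v)+2$ provides exactly that slack. To repair your argument, replace the list-trimming step by vertex deletion as in the paper.
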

	\begin{proof}
	If $|\verlist(v)| = 1$ for a vertex $v \in V(G^\prime)$, then any $\numk$-list coloring $\mapf^\prime$ of $G^\prime$ assigns the same color $c \in \verlist(v)$ to $v$. 
	Therefore, $\mapf^\prime$ never assigns $c$ to any neighbor $u$ of $v$. 
	We can thus delete $v$ from $G^\prime$ and set $\verlist(u) := \verlist(u) \setminus \{ c\}$ for all neighbors $u$ of $v$ in $G^\prime$. 
	Clearly, this modification does not affect the reconfigurability (i.e., the existence or non-existence of a path in the reconfiguration graph).

	If $|\verlist(v)| \ge \degG(v) + 2$ for a vertex $v \in V(G^\prime)$, we simply delete $v$ from $G^\prime$ without any modification of lists;
let $G$ be the resulting graph.
	Let $\mapf$ be any $\numk$-list coloring of $G$, and consider a recoloring step for a neighbor $u$ of $v$ from the current color $c = \mapf(u)$ to another color $c^{\prime}$. 
	If $c^\prime$ is not assigned to $v$ in $G^\prime$, we can directly recolor $u$ from $c$ to $c^{\prime}$.
	Thus, suppose that $c^{\prime}$ is assigned to $v$ in $G^\prime$. 
	Then, since $|\verlist(v)| \ge \degG(v) +2$, there is at least one color $c^* \in \verlist(v)$ which is not $c^{\prime}$ and is not assigned to any of $\degG(v)$ neighbors of $v$ by $\mapf$.
	Therefore, we first recolor $v$ from $c^{\prime}$ to $c^*$, and then recolor $u$ from $c$ to $c^{\prime}$. 
	In this way, any recoloring step in $G$ can be simulated in $G^\prime$, and hence the modification does not affect the reconfigurability. 

	Thus, we can obtain an instance such that $2 \le |\verlist(v)| \le \degG(v)+1$ holds for each vertex $v$ without affecting the reconfigurability. 
	Clearly, the modified instance can be constructed in polynomial time.
	\qed
	\end{proof}
	
	Therefore, in the remainder of this section, we assume that $G$ is a (connected) caterpillar and $2 \le |\verlist(v)| \le \degG(v)+1$ holds for every vertex $v \in V(G)$.
	In particular, $|\verlist(v)| = 2$ for every leaf $v$ of $G$.

\subsection{Idea and definitions}

	The main idea of our algorithm is to extend techniques developed for {\sc shortest path rerouting}~\cite{Bon12}, and apply them to {\sc $\numk$-list coloring reconfiguration} for caterpillars. 
	Our algorithm employs a dynamic programming method based on the vertex ordering $v_1, v_2, \ldots, v_n$ of $G$. 

	For each $i \in \{1,2, \ldots, n\}$, let $\Rgraph{G_i}{L}$ be the reconfiguration graph for the subgraph $G_i$ and the list $L$. 
	Then, $\Rgraph{G_i}{L}$ contains all $\numk$-list colorings of $G_i$ as its nodes.
	Our algorithm efficiently constructs $\Rgraph{G_i}{L}$ for each $i = 1,2, \ldots, n$, in this order.
	However, of course, the number of nodes in $\Rgraph{G_i}{L}$ cannot be bounded by a polynomial size in general.
	We thus use the property that the vertex $v_{i+1}$ (will be added to $G_i$) is adjacent with only the spine vertex $\spine{i}$ in $G_{i+1}$;
and we ``encode'' the reconfiguration graph $\Rgraph{G_i}{L}$ into a polynomial size with keeping the information of (1) the color assigned to $\spine{i}$, and (2) the connectivity of nodes in $\Rgraph{G_i}{L}$. 
	
	Before explaining the encoding methods, we first note that it suffices to focus on only one connected component in $\Rgraph{G_i}{L}$ which contains the restriction of $\mapf_{\ini}$, where the {\em restriction} of a $\numk$-list coloring $\mapf$ of a graph $G$ to a subgraph $G^\prime$ is a $\numk$-list coloring $\mapi$ of $G^\prime$ such that $\mapi(v) = \mapf(v)$ hold for all vertices $v \in V(G^\prime)$. 
	For notational convenience, we denote by $\mapf[V_i]$ the restriction of a $\numk$-list coloring $\mapf$ of a caterpillar $G$ to its subgraph $G_i$. 
	Then, we have the following lemma.
	\begin{lemma} \label{lem:restrict}
	Let $\mapg$ be a $\numk$-list coloring of $G_i$ such that $\mapf_{\ini}[V_i]$ and $\mapg$ are contained in the same connected component in $\Rgraph{G_i}{L}$. 
	Then, for each $j \in \{1,2, \ldots, i-1\}$, $\mapf_{\ini}[V_j]$ and $\mapg[V_j]$ are contained in the same connected component in $\Rgraph{G_j}{L}$. 
	\end{lemma}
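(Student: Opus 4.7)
The plan is to prove this by a direct projection argument on any reconfiguration sequence. Since $\mapf_{\ini}[V_i]$ and $\mapg$ lie in the same connected component of $\Rgraph{G_i}{L}$, there exists a sequence $\seq{F} = \hseq{\mapf_0, \mapf_1, \ldots, \mapf_\ell}$ of $\numk$-list colorings of $G_i$ with $\mapf_0 = \mapf_{\ini}[V_i]$, $\mapf_\ell = \mapg$, and $\mapf_{t-1}$, $\mapf_t$ adjacent in $\Rgraph{G_i}{L}$ for every $t = 1, 2, \ldots, \ell$. I will fix an arbitrary $j \in \{1, 2, \ldots, i-1\}$ and show that the restriction operator $\mapf \mapsto \mapf[V_j]$ carries $\seq{F}$ to a walk in $\Rgraph{G_j}{L}$ from $\mapf_\ini[V_j]$ to $\mapg[V_j]$.

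The first step is to check that each $\mapf_t[V_j]$ is indeed a $\numk$-list coloring of $G_j$. This is immediate: $G_j$ is a subgraph of $G_i$, so every edge of $G_j$ is an edge of $G_i$ and thus receives distinct colors under $\mapf_t$; moreover the list constraints $\mapf_t(v) \in \verlist(v)$ are inherited for each $v \in V_j$. Hence $\mapf_t[V_j]$ is a node of $\Rgraph{G_j}{L}$.

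The second step is to analyze consecutive pairs $\mapf_{t-1}[V_j]$ and $\mapf_t[V_j]$. By the definition of adjacency in $\Rgraph{G_i}{L}$, there is a unique vertex $u \in V_i$ with $\mapf_{t-1}(u) \neq \mapf_t(u)$. If $u \in V_j$, then the restrictions differ in the single vertex $u$, so $\mapf_{t-1}[V_j]$ and $\mapf_t[V_j]$ are adjacent in $\Rgraph{G_j}{L}$. If $u \in V_i \setminus V_j$, then $\mapf_{t-1}$ and $\mapf_t$ agree on every vertex of $V_j$, so $\mapf_{t-1}[V_j] = \mapf_t[V_j]$.

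Concatenating these observations for $t = 1, 2, \ldots, \ell$ (deleting repetitions where consecutive restrictions are equal) yields a walk in $\Rgraph{G_j}{L}$ from $\mapf_0[V_j] = \mapf_\ini[V_j]$ to $\mapf_\ell[V_j] = \mapg[V_j]$. Therefore these two nodes lie in the same connected component of $\Rgraph{G_j}{L}$, completing the proof. There is no substantive obstacle here: the argument is purely formal, and the only point requiring attention is the case distinction on whether the recolored vertex lies in $V_j$.
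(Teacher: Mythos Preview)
Your proof is correct and follows essentially the same approach as the paper: the paper takes a path in $\Rgraph{G_i}{L}$ and ``contracts all edges in the path that correspond to recoloring vertices in $V_i \setminus V_j$,'' which is exactly your case distinction on whether the recolored vertex lies in $V_j$. Your write-up is in fact slightly more explicit (you verify that each restriction is a valid $\numk$-list coloring of $G_j$), but the argument is the same.
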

	\begin{proof}
	Assume that $\mapf_{\ini}[V_i]$ and $\mapg$ are contained in the same connected component in $\Rgraph{G_i}{L}$.
	Then, there exists a path in $\Rgraph{G_i}{L}$ between $\mapf_{\ini}[V_i]$ and $\mapg$.
	We contract all edges in the path that correspond to recoloring vertices in $V_i \setminus V_j$.
	Since each edge in the resulting path corresponds to recoloring only one vertex in $V_j$, the resulting path must be contained as a path in $\Rgraph{G_j}{L}$ between $\mapf_{\ini}[V_j]$ and $\mapg[V_j]$.
	Therefore $\mapf_{\ini}[V_j]$ and $\mapg[V_j]$ are contained in the same connected component in $\Rgraph{G_j}{L}$, and hence the lemma follows. 
	\qed
	\end{proof}
	
	From now on, we thus focus on only the connected component of $\Rgraph{G_i}{L}$ which contains $\mapf_{\ini}[V_i]$.
	Since the list is fixed to be $L$ in the remainder of this section, we simply denote by $\Ri{i}$ the reconfiguration graph $\Rgraph{G_i}{L}$, and by $\Rizero{i}$ the connected component of $\Ri{i} = \Rgraph{G_i}{L}$ containing $\mapf_{\ini}[V_i]$.
\medskip


		
	

\noindent
	{\bf Encoding graph.}
	
	We now partition the nodes of $\Rizero{i}$ into several subsets with respect to (1) the color assigned to $\spine{i}$, and (2) the connectivity of nodes in $\Rizero{i}$. 
	For two nodes $\mapi$ and $\mapip$ of $\Rizero{i}$ with $\mapi(\spine{i}) = \mapip(\spine{i})$, we write $\mapi \eqrel{\spine{i}} \mapip$ if $\mapi$ can be reconfigured into $\mapip$ without recoloring the color assigned to the vertex $\spine{i}$, that is, 
$\Rizero{i}$ has a path $\hseq{\mapi_1, \mapi_2, \ldots, \mapi_\ell}$ such that $\mapi_1 = \mapi$, $\mapi_{\ell} = \mapip$, and $\mapi_j(\spine{i}) = \mapi(\spine{i}) = \mapip(\spine{i})$ holds for every $j \in \hset{1, 2, \ldots, \ell}$.
	Since the adjacency relation on $\numk$-list colorings is symmetric (i.e., $\Ri{i}$ is an undirected graph), it is easy to see that $\eqrel{\spine{i}}$ is an equivalence relation. 
	Thus, the node set of $\Rizero{i}$ can be uniquely partitioned by the relation $\eqrel{\spine{i}}$.
	We denote by $\parF{i}$ the partition of the node set of $\Rizero{i}$ into equivalence classes with respect to $\eqrel{\spine{i}}$.


\begin{figure}[t]
\begin{center}
\includegraphics[width=0.9\textwidth]{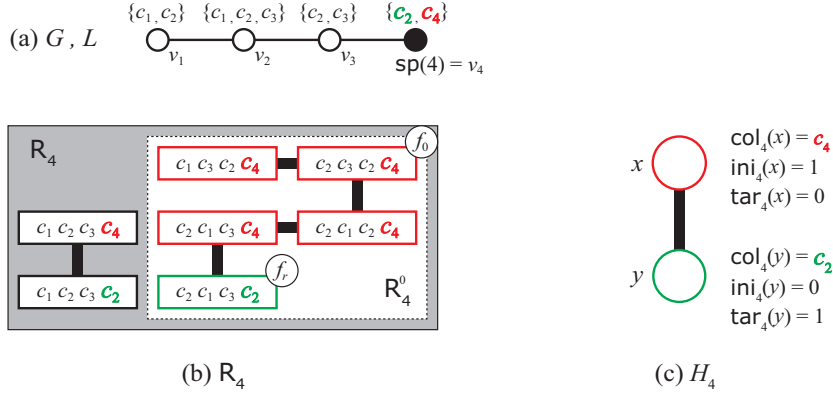}
\end{center}
\vspace{-1em}
\caption{(a) A caterpillar $G = G_4$, (b) the reconfiguration graph $\Ri{4}$ consisting of all $\numk$-list colorings of $G$, and (c) the encoding graph $\Enc{4}$ of $\Rizero{4}$ consisting of two e-nodes $x$ and $y$, where each $\numk$-list coloring in (b) is represented as the sequence of colors assigned to the vertices in $G$ from left to right.}
\label{fig:encoding}
\end{figure}
	
	We finally define our dynamic programming table.
	For each subgraph $G_i$, $i \in \{1, 2, \ldots, n\}$, our algorithm keeps track of four information $(\Enc{i}, \labcf{i}, \labinif{i}, \labtarf{i})$, defined as follows.
	\begin{itemize}
		\item The \textit{encoding graph} $\Enc{i}$ of $\Rizero{i}$ which can be obtained from $\Rizero{i}$ by contracting each node set in $\parF{i}$ into a single node.
		(See \figurename~\ref{fig:encoding} as an example.)
				We will refer to an {\em e-node} of $\Enc{i}$ in order to distinguish it from a node of $\Rizero{i}$.
(Thus, each node refers to a $\numk$-list coloring of $G_i$, and each e-node refers to a set of $\numk$-list colorings of $G_i$.) 
				For each e-node $x \in V(\Enc{i})$, we denote by $\mapset{i}{x}$ the set of all nodes in $\Rizero{i}$ that were contracted into $x$.
				(Note that we do not compute $\mapset{i}{x}$, but use it only for definitions and proofs.)
\smallskip

		\item The color $\labc{i}{x} \in \verlist(\spine{i})$ for each e-node $x \in V(\Enc{i})$, which is assigned to $\spine{i}$ in common by the nodes (i.e., $\numk$-list colorings of $G_i$) in $\mapset{i}{x}$.
\smallskip

		\item The label $\labini{i}{x} \in \{0, 1\}$ for each e-node $x \in V(\Enc{i})$, such that $\labini{i}{x}=1$ if $\mapf_\ini[V_i]\in \mapset{i}{x}$, otherwise $\labini{i}{x} = 0$.
\smallskip

		\item The label $\labtar{i}{x} \in \{0, 1\}$ for each e-node $x \in V(\Enc{i})$, such that $\labtar{i}{x}=1$ if $\mapf_\tar[V_i]\in \mapset{i}{x}$, otherwise $\labtar{i}{x} = 0$.
	\end{itemize}

	To prove Theorem~\ref{the:caterpillar}, we give a polynomial-time algorithm which computes $(\Enc{i}, \labcf{i}, \labinif{i}, \labtarf{i})$ for each subgraph $G_i$, $i \in \{1, 2, \ldots, n\}$, by means of dynamic programming.
	Then, the problem can be solved as in the following lemma.	
	\begin{lemma} \label{lem:hantei}
	$(G, \verlist, \mapf_{\ini}, \mapf_{\tar})$ is a yes-instance if and only if the encoding graph $\Enc{n}$ contains a node $x$ such that $\labtar{n}{x} = 1$.
	\end{lemma}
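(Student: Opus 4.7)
The plan is to prove both directions by directly unpacking the definitions, since Lemma~\ref{lem:hantei} is essentially a correctness statement asserting that all the reconfigurability information needed to decide the instance has been preserved in the encoded table $(\Enc{n}, \labcf{n}, \labinif{n}, \labtarf{n})$. No heavy combinatorial argument should be required at this step; all of the work has already been done in setting up $\Rizero{n}$ and the equivalence relation $\eqrel{\spine{n}}$.

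For the only-if direction, I would start by assuming $(G,\verlist,\mapf_{\ini},\mapf_{\tar})$ is a yes-instance. By definition this means $\Ri{n}=\Rgraph{G}{\verlist}$ contains a path between the nodes $\mapf_{\ini}=\mapf_{\ini}[V_n]$ and $\mapf_{\tar}=\mapf_{\tar}[V_n]$. Hence $\mapf_{\tar}$ lies in the same connected component as $\mapf_{\ini}$, which is exactly $\Rizero{n}$ by definition. Since $\parF{n}$ partitions the node set of $\Rizero{n}$ into $\eqrel{\spine{n}}$-equivalence classes, $\mapf_{\tar}$ belongs to some class $\mapset{n}{x}$ for an e-node $x \in V(\Enc{n})$, and therefore $\labtar{n}{x}=1$ by the defining condition of $\labtarf{n}$.

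For the if direction, I would assume there is an e-node $x \in V(\Enc{n})$ with $\labtar{n}{x}=1$. By the defining condition, $\mapf_{\tar}[V_n]=\mapf_{\tar}\in \mapset{n}{x}$. Since $\mapset{n}{x}$ is a subset of nodes of $\Rizero{n}$, this places $\mapf_{\tar}$ in the same connected component of $\Ri{n}=\Rgraph{G}{\verlist}$ as $\mapf_{\ini}$, so a reconfiguration sequence between them exists, and the instance is a yes-instance.

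I don't foresee any real obstacle here; the only subtle point is to make sure I use the right objects, namely that $\Rizero{n}$ is the connected component of $\Rgraph{G}{\verlist}$ containing $\mapf_{\ini}$ (and not some strictly smaller graph), and that $\mapset{n}{x}\subseteq V(\Rizero{n})$ by the way the encoding graph is constructed via contraction. Once these identifications are made explicit, the biconditional follows immediately from the definitions of $\parF{n}$, $\mapset{n}{\cdot}$, and $\labtarf{n}$.
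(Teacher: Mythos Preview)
Your proposal is correct and follows essentially the same approach as the paper's proof: both rely on the observations that $\mapset{n}{x}\subseteq V(\Rizero{n})$ and that $\Rizero{n}$ is precisely the connected component of $\Rgraph{G}{\verlist}$ containing $\mapf_{\ini}$. The paper's version is terser (it treats the only-if direction as implicit), whereas you spell out both directions explicitly, but the underlying argument is identical.
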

	\begin{proof}
	Since $\Enc{n}$ contains a node $x$ such that $\labtar{n}{x} = 1$, we have $\mapf_\tar[V_n] = \mapf_\tar \in \mapset{n}{x}$.
	Recall that $\Enc{n}$ is the encoding graph of $\Rizero{n}$ which contains the $\numk$-list coloring $\mapf_{\ini}$ of $G$ as a node.
	Since $\mapset{n}{x} \subseteq V(\Rizero{n})$, the lemma clearly follows.	
	\qed
	\end{proof}

\begin{figure}[t]
\begin{center}
\includegraphics[width=\textwidth]{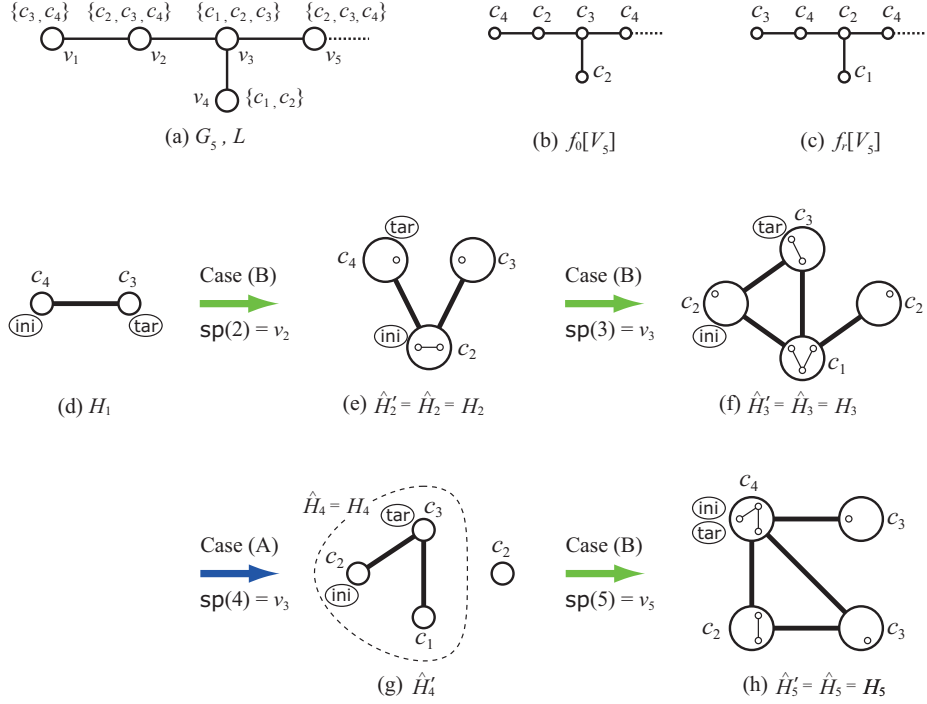}
\end{center}
\vspace{-2em}
\caption{Application of our algorithm.
In (d)--(h), $\labc{i}{x} \in \verlist(\spine{i})$ is attached to each e-node $x$, and the e-nodes $x$ with $\labini{i}{x} = 1$ and $\labtar{i}{x} = 1$ have the labels ``{\sf ini}'' and ``{\sf tar},'' respectively.
Furthermore, in (e), (f) and (h), the small graph contained in each e-node $x$ of $\Enc{i}$ represents the subgraph of $\Enc{i-1}$ induced by $\nset{x}$.}
\label{fig:algorithm}
\end{figure}

	\subsection{Algorithm}
	\label{subsec:alg}

	As the initialization, we first consider the case where $i = 1$, that is, 
we compute $(\Enc{1}, \labcf{1}, \labinif{1}, \labtarf{1})$. 
(See \figurename~\ref{fig:algorithm}(d) as an example.)
	Note that $G_1$ consists of a single vertex $v_1$, and recall that $v_1$ is a spine vertex of degree one. 
	By Lemma~\ref{lem:transform} we then have $|\verlist(v_1)| = 2$. 
	Therefore, the reconfiguration graph $\Ri{1}$ is a complete graph on $|\verlist(v_1)| = 2$ nodes such that each node corresponds to a $\numk$-list coloring of $G_1$ which assigns a distinct color to the vertex $\spine{1} = v_1$.
	Since $\Ri{1}$ is complete and contains the node $\mapf_{\ini}[V_1]$, we have $\Rizero{1} = \Ri{1}$. 
	Furthermore, $\Enc{1} = \Rizero{1}$ since all nodes in $\Rizero{1}$ assign distinct colors in $\verlist(v_1)$ to $\spine{1} = v_1$. 
	Then, for each e-node $x$ of $\Enc{1}$ corresponding to the set consisting of a single $\numk$-list coloring $\mapi$ of $G_1$, we set 
	\begin{eqnarray*}
	\labc{1}{x} &=& \mapi(v_1); \\
	\labini{1}{x} &=& \left\{
		\begin{array}{ll}
		1 & ~~~\mbox{if $\mapi(v_1) = \mapf_{\ini}(v_1)$},\\
		0 & ~~~\mbox{otherwise};
		\end{array} \right. \\
	\labtar{1}{x} &=& \left\{
		\begin{array}{ll}
		1 & ~~~\mbox{if $\mapi(v_1) = \mapf_{\tar}(v_1)$}, \\
		0 & ~~~\mbox{otherwise}.
		\end{array} \right.
	\end{eqnarray*}


\medskip

\begin{figure}[t]
\begin{center}
\includegraphics[width=0.8\textwidth]{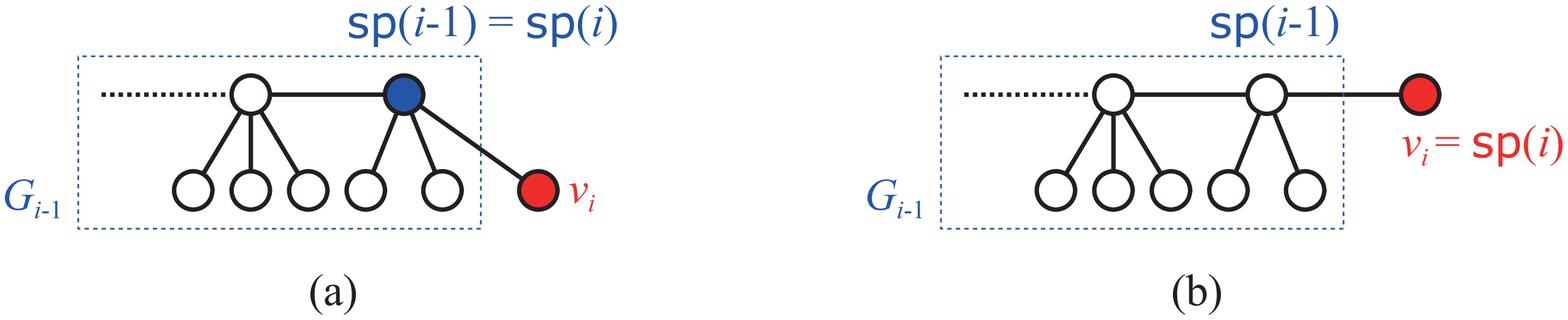}
\end{center}
\vspace{-2em}
\caption{The graph $G_i$ for (a) $v_i \in \Vleaf$ and (b) $v_i \in \Vspine$.}
\label{fig:update}
\vspace{-1em}
\end{figure}

	For $i \ge 2$, suppose that we have already computed $(\Enc{i-1}, \labcf{i-1}, \labinif{i-1}, \labtarf{i-1})$.
	Then, we compute $(\Enc{i}, \labcf{i}, \labinif{i}, \labtarf{i})$, as follows.
\medskip

\noindent
	{\bf Case (A): $v_i$ is a leaf in $\Vleaf$.} 
(See Figs.~\ref{fig:algorithm}(g) and \ref{fig:update}(a).)
	
	By Lemma~\ref{lem:transform} we have $|\verlist(v_i)| = 2$ in this case; let $\verlist(v_i) = \{ c_1, c_2 \}$. 
	Recall that $v_i$ is adjacent with only the spine vertex $\spine{i-1}$ in $G_i$.  
	Furthermore, $\spine{i} = \spine{i-1}$ in this case. 

	Let $\Encsub{i-1}{c_1}$ be the subgraph of $\Enc{i-1}$ obtained by deleting all e-nodes $y$ in $\Enc{i-1}$ with $\labc{i-1}{y} = c_1$. 
	Then, $\Encsub{i-1}{c_1}$ encodes all nodes of $\Rizero{i-1}$ that do not assign the color $c_1$ to $\spine{i-1}$.
	Thus, we can extend each $\numk$-list coloring $\mapj$ of $G_{i-1}$ encoded in $\Encsub{i-1}{c_1}$ to a $\numk$-list coloring $\mapi$ of $G_i$ such that $\mapi(v_i) = c_1$ and $\mapi(v) = \mapj(v)$ for all vertices $v \in V_{i-1}$. 
	Similarly, let $\Encsub{i-1}{c_2}$ be the subgraph of $\Enc{i-1}$ obtained by deleting all e-nodes $z$ in $\Enc{i-1}$ with $\labc{i-1}{z} = c_2$. 

	We define an encoding graph $\Encp{i}$ as 
$V(\Encp{i}) = V(\Encsub{i-1}{c_1}) \cup V(\Encsub{i-1}{c_2})$ and 
$E(\Encp{i}) = E(\Encsub{i-1}{c_1}) \cup E(\Encsub{i-1}{c_2})$;
and let $\Enct{i}$ be the connected component of $\Encp{i}$ that contains the e-node $x$ such that $\labini{i-1}{x} = 1$.
	For each e-node $x$ in $\Enct{i}$, let $\labct{i}{x} = \labc{i-1}{x}$, $\labinit{i}{x} = \labini{i-1}{x}$ and $\labtart{i}{x} = \labtar{i-1}{x}$.
	Then, we have the following lemma. 
	\begin{lemma} \label{lem:leaf}
	For a leaf $v_i \in \Vleaf$, $(\Enc{i}, \labcf{i}, \labinif{i}, \labtarf{i}) = (\Enct{i}, \labcft{i}, \labinift{i}, \labtarft{i})$.
	\end{lemma}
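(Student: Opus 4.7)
The plan is to exhibit a bijection $\phi : V(\Enct{i}) \to V(\Enc{i})$ (identifying each e-node with its coloring set $\mapset{i}{\cdot}$) defined by letting $\phi(x)$ be the set of all $\numk$-list colorings of $G_i$ whose restriction to $V_{i-1}$ lies in $\mapset{i-1}{x}$, and then to verify that $\phi$ preserves both edges and the labels $\labcf{i}, \labinif{i}, \labtarf{i}$. The key structural fact, used throughout, is that $v_i$ is a leaf whose sole neighbor in $G_i$ is $\spine{i} = \spine{i-1}$, so that any $V_{i-1}$-recoloring leaving $\spine{i-1}$ fixed is automatically safe in $G_i$, while the two-element list $\verlist(v_i) = \{c_1, c_2\}$ serves as a toggle that controls which $\spine{i-1}$-recolorings are currently realizable in $G_i$.

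First I would show that $\phi(x)$ lies in a single $\eqrel{\spine{i}}$-equivalence class of $\Ri{i}$. For $\mapi, \mapip \in \phi(x)$, the restrictions $\mapi[V_{i-1}], \mapip[V_{i-1}]$ are $\eqrel{\spine{i-1}}$-equivalent inside $\mapset{i-1}{x}$, so I lift the connecting $V_{i-1}$-path (which does not touch $\spine{i-1}$) to $G_i$ while keeping $v_i$ fixed, and then append a single $v_i$-recoloring if $\mapi(v_i) \neq \mapip(v_i)$, valid because $\mapip$ itself is a proper coloring of $G_i$; this whole lift never disturbs $\spine{i}$. Next, for every $x \in V(\Enct{i})$ I would establish $\phi(x) \subseteq V(\Rizero{i})$ by lifting the path in $\Encp{i}$ from $x$ to the unique e-node $x_0$ with $\labini{i-1}{x_0}=1$: each edge of this path lies in some $\Encsub{i-1}{c_j}$, and just before traversing it I preemptively set $v_i := c_j$ (feasible because the current $\spine{i-1}$-color differs from $c_j$ by the definition of $\Encsub{i-1}{c_j}$), whereupon the lifted $\spine{i-1}$-recoloring in $G_i$ is valid since both endpoint colors of the edge also avoid $c_j$. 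This is the step where the restriction to the two subgraphs $\Encsub{i-1}{c_1}, \Encsub{i-1}{c_2}$ does genuine work.

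For surjectivity and edge-preservation I would argue dually. Given an equivalence class $y$ of $\Rizero{i}$, any $\mapi \in \mapset{i}{y}$ restricts (by Lemma~\ref{lem:restrict}) to a coloring in $\mapset{i-1}{x}$ for some $x \in V(\Enc{i-1})$, and projecting a $\mapf_\ini[V_i]$-to-$\mapi$ path in $\Ri{i}$ onto $\Enc{i-1}$ yields an $x_0$-to-$x$ walk in which every $\spine{i-1}$-recoloring step happens while $v_i$ carries some color $c_j \in \{c_1,c_2\}$ distinct from both endpoint colors; that step's edge therefore lies in $\Encsub{i-1}{c_j}$, so $x \in V(\Enct{i})$ and $\phi(x) = \mapset{i}{y}$. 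An edge between $\phi(x_a)$ and $\phi(x_b)$ in $\Enc{i}$ then exists iff both an $\Enc{i-1}$-edge $x_a x_b$ exists and some $c \in \{c_1,c_2\} \setminus \{\labc{i-1}{x_a},\labc{i-1}{x_b}\}$ is available for $v_i$, which is precisely the condition $x_a x_b \in E(\Encsub{i-1}{c_1}) \cup E(\Encsub{i-1}{c_2}) = E(\Encp{i})$. Label equalities follow immediately: $\labct{i}{x}=\labc{i-1}{x}$ matches the common $\spine{i}$-color of $\phi(x)$, and the initial/target labels transfer under restriction, where the containment $\phi(x) \subseteq V(\Rizero{i})$ for $x \in V(\Enct{i})$ promotes $\mapf_\tar[V_{i-1}] \in \mapset{i-1}{x}$ to $\mapf_\tar[V_i] \in \mapset{i}{\phi(x)}$.

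The main obstacle is the path-lift in the second paragraph: verifying that every $\spine{i-1}$-recoloring prescribed by a walk in $\Encp{i}$ can be realized in $G_i$ reduces to showing that $v_i$ can always be preemptively toggled to the color dictated by the subgraph containing that edge, which succeeds exactly because $\Encsub{i-1}{c_j}$ excludes the e-nodes whose $\spine{i-1}$-color would block the toggle. The dual projection in the third paragraph is symmetric in spirit but extracts the correct index $c_j$ from a recoloring step in $\Ri{i}$ rather than imposing it from the outset.
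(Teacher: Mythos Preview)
Your proposal is correct and follows essentially the same approach as the paper's proof (which is split into Lemmas~\ref{lem:leaf1}--\ref{lem:leaf4} in the appendix): both arguments identify each e-node of $\Enct{i}$ with the set of $G_i$-colorings extending the corresponding class in $\Enc{i-1}$, verify that this set is a single $\eqrel{\spine{i}}$-class contained in $\Rizero{i}$, and establish edge and label preservation by the same toggle-$v_i$ lifting and restrict-to-$V_{i-1}$ projection maneuvers. The only organizational difference is direction: the paper first embeds $V(\Enc{i})$ into $V(\Encp{i})$ and then shows the image equals $V(\Enct{i})$ via a boundary-edge contradiction, whereas you build the map $\phi$ from $V(\Enct{i})$ to $V(\Enc{i})$ directly and argue surjectivity by projection; the technical content is the same.
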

\medskip

\noindent
	{\bf Case (B): $v_i$ is a spine vertex in $\Vspine$.} 
(See Figs.~\ref{fig:algorithm}(e), (f), (h) and \ref{fig:update}(b).)

	In this case, notice that $\spine{i} = v_i$ in $G_i$, and hence we need to update $\labcf{i}$ according to the color assigned to $v_i$.
	
	We first define an encoding graph $\Encp{i}$, as follows. 
	For a color $c \in \verlist(v_i)$, let $\Encsub{i-1}{c}$ be the subgraph of $\Enc{i-1}$ obtained by deleting all e-nodes $y$ in $\Enc{i-1}$ with $\labc{i-1}{y} = c$.
	For each connected component in $\Encsub{i-1}{c}$, we add a new e-node $x$ to $\Encp{i}$ such that $\labct{i}{x} = c$;
we denote by $\nset{x}$ the set of all e-nodes in $\Encsub{i-1}{c}$ that correspond to $x$. 
	We apply this operation to all colors in $\verlist(v_i)$.
	We then add edges to $\Encp{i}$: two e-nodes $x$ and $y$ in $\Encp{i}$ are joined by an edge if and only if $\nset{x} \cap \nset{y} \neq \emptyset$. 
	
	We now define $\labinit{i}{x}$ and $\labtart{i}{x}$ for each e-node $x$ in $\Encp{i}$, as follows: 
	\[
		\labinit{i}{x} = \left\{
				\begin{array}{ll}
				1 & ~~~\mbox{if $\labct{i}{x} = \mapf_{\ini}(v_i)$ and } \\
				   & ~~~\mbox{$\nset{x}$ contains an e-node $y$ with $\labini{i-1}{y} = 1$}; \\
				0 & ~~~\mbox{otherwise}, 
				\end{array} \right.
	\]
and 
	\[
		\labtart{i}{x} = \left\{
				\begin{array}{ll}
				1 & ~~~\mbox{if $\labct{i}{x} = \mapf_{\tar}(v_i)$ and } \\
				   & ~~~\mbox{$\nset{x}$ contains an e-node $y$ with $\labtar{i-1}{y} = 1$}; \\
				0 & ~~~\mbox{otherwise}. 
				\end{array} \right.
	\]

	Let $\Enct{i}$ be the connected component of $\Encp{i}$ that contains the e-node $x$ such that $\labinit{i}{x} = 1$.
	Then, we have the following lemma. 
	\begin{lemma} \label{lem:spine}
	For a spine vertex $v_i \in \Vspine$, $(\Enc{i}, \labcf{i}, \labinif{i}, \labtarf{i}) = (\Enct{i}, \labcft{i}, \labinift{i}, \labtarft{i})$.
	\end{lemma}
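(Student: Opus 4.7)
The plan is to prove, by induction on $i$, that the computed quadruple for $G_i$ agrees with the true one; the spine-vertex case reduces to exhibiting an explicit quotient structure that matches the algorithm's construction. First I would identify every $\numk$-list coloring of $G_i$ with a pair $(\mapi,c)$, where $\mapi$ is a $\numk$-list coloring of $G_{i-1}$, $c\in \verlist(v_i)$, and $\mapi(\spine{i-1})\neq c$ (using that $v_i$ is adjacent only to $\spine{i-1}$ in $G_i$ and $\spine{i}=v_i$). Let $R^{\ast}$ be the subgraph of $\Ri{i}$ induced on those pairs with $\mapi\in \Rizero{i-1}$. By Lemma~\ref{lem:restrict} we have $\Rizero{i}\subseteq R^{\ast}$; conversely, any edge of $\Ri{i}$ incident to a node of $R^{\ast}$ either preserves $\mapi$ or moves $\mapi$ along an edge of $\Ri{i-1}$, and in both cases the new $\mapi$ stays inside $\Rizero{i-1}$. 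Hence the connected component of $R^{\ast}$ containing $\mapf_\ini[V_i]$ is precisely $\Rizero{i}$.

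Next I would characterise $\eqrel{v_i}$ on $R^{\ast}$: two pairs $(\mapi,c)$ and $(\mapi',c')$ are equivalent iff $c=c'$ and $\mapi,\mapi'$ are connected within the subgraph of $\Rizero{i-1}$ consisting of colorings that do not assign $c$ to $\spine{i-1}$. By the inductive hypothesis, the $\eqrel{\spine{i-1}}$-classes inside this subgraph are exactly the sets $\mapset{i-1}{y}$ for $y\in V(\Encsub{i-1}{c})$. Moreover, any edge of $\Enc{i-1}$ joins e-nodes of distinct $\spine{i-1}$-colors, for if two adjacent $\numk$-list colorings of $G_{i-1}$ shared the same $\spine{i-1}$-color, the single recoloring step between them would avoid $\spine{i-1}$, making them $\eqrel{\spine{i-1}}$-equivalent and so encoded by the same e-node. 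Consequently, deleting the $c$-coloured e-nodes loses no edges between survivors, and the connected components of $\Encsub{i-1}{c}$ biject with those of the ``forbid-$c$'' subgraph of $\Rizero{i-1}$. This yields a bijection between the e-nodes of $\Encp{i}$ and the $\eqrel{v_i}$-classes of $R^{\ast}$.

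For the edges of the quotient, two $\eqrel{v_i}$-classes of colors $c_x$ and $c_y$ are adjacent iff they share a $G_{i-1}$-coloring $\mapi$ with $\mapi(\spine{i-1})\notin\{c_x,c_y\}$; the e-node of $\Enc{i-1}$ containing such a $\mapi$ then lies in both $\nset{x}$ and $\nset{y}$, which is exactly the rule used to add edges to $\Encp{i}$. Because quotienting commutes with taking the connected component of $\mapf_\ini[V_i]$, the subgraph $\Enct{i}$ of $\Encp{i}$ equals the encoding graph $\Enc{i}$ of $\Rizero{i}$. The labels then follow by construction: $\labct{i}{x}$ is the common value of $v_i$ on $\mapset{i}{x}$; and $\labinit{i}{x}=1$ precisely when $\mapf_\ini[V_i]$ lies in the matched class, which amounts to $\labct{i}{x}=\mapf_\ini(v_i)$ together with $\mapf_\ini[V_{i-1}]$'s e-node $y$ (the unique one with $\labini{i-1}{y}=1$) lying in $\nset{x}$; the argument for $\labtart{i}{x}$ is identical.

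The main obstacle is establishing the correspondence between connected components of $\Encsub{i-1}{c}$ and connected components of the forbid-$c$-at-$\spine{i-1}$ subgraph of $\Rizero{i-1}$. This rests on the structural claim that adjacent e-nodes of $\Enc{i-1}$ carry distinct $\spine{i-1}$-colors, which must be derived carefully from the definition of $\eqrel{\spine{i-1}}$. Without it, deleting $c$-coloured e-nodes might sever edges that would actually survive at the $\numk$-list-coloring level, and the bijection required to identify $\Encp{i}$ with the quotient of $R^{\ast}$ would break down.
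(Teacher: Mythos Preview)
Your proposal is correct and follows essentially the same route as the paper's appendix proof (Lemmas~\ref{lem:spine1}--\ref{lem:spine3}): identify each $\eqrel{v_i}$-class with a pair consisting of a color $c\in\verlist(v_i)$ and a connected component of $\Encsub{i-1}{c}$, match edges via the condition $\nset{x}\cap\nset{y}\neq\emptyset$, check the labels, and then restrict to the component containing $\mapf_\ini[V_i]$. Your packaging through the auxiliary graph $R^{\ast}$ and a single quotient argument is slightly more streamlined than the paper's four separate sub-lemmas, but the mathematical content is the same; the ``main obstacle'' you flag is in fact routine, since the forbid-$c$ subgraph of $\Rizero{i-1}$ is a union of whole $\eqrel{\spine{i-1}}$-classes and hence its quotient is automatically the induced subgraph $\Encsub{i-1}{c}$.
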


	\subsection{Running time}

	We now estimate the running time of our algorithm in Section~\ref{subsec:alg}. 
	The following is the key lemma for the estimation. 
	\begin{lemma} \label{lem:size}
	For each index $i \in \{1,2,\ldots, n\}$, 
	\[
		|V(\Enc{i})| \le \left\{
				\begin{array}{ll}
				2 & ~~~\mbox{if $i = 1$}; \\
				|V(\Enc{i-1})| + \degG(v_i) & ~~~\mbox{otherwise}.
				\end{array} \right.
	\]
	In particular, $|V(\Enc{n})| = O(n)$, where $n$ is the number of vertices in $G$. 
	\end{lemma}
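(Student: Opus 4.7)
The plan is to proceed by induction on $i$, mirroring the case analysis of the algorithm. The base case $i=1$ is immediate: since $|\verlist(v_1)|=2$ by Lemma~\ref{lem:transform}, the encoding graph $\Enc{1}$ consists of exactly two e-nodes, one for each color of $v_1$.

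For the inductive step I first dispose of Case~A, where $v_i \in \Vleaf$. Here the construction puts $V(\Encp{i}) = V(\Encsub{i-1}{c_1}) \cup V(\Encsub{i-1}{c_2}) \subseteq V(\Enc{i-1})$, so $|V(\Enc{i})| \le |V(\Encp{i})| \le |V(\Enc{i-1})|$, which already meets the required bound.

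The main obstacle is Case~B, where $v_i \in \Vspine$ and each color class deletion $\Encsub{i-1}{c}$ can split $\Enc{i-1}$ into many components, potentially generating many new e-nodes. The plan is to establish two structural facts about $\Enc{i-1}$ and then combine them with a spanning-tree counting argument. First, $\Enc{i-1}$ is connected, since it is obtained by contracting the connected reconfiguration component $\Rizero{i-1}$. Second, $\labcf{i-1}$ is a proper coloring of $\Enc{i-1}$: if two adjacent nodes $\mapi, \mapip$ of $\Rizero{i-1}$ assigned the same color to $\spine{i-1}$ they would automatically be $\eqrel{\spine{i-1}}$-equivalent, hence represent the same e-node; so any two distinct adjacent e-nodes of $\Enc{i-1}$ must differ in $\labcf{i-1}$. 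In particular, every color class $V_c := \{y \in V(\Enc{i-1}) : \labc{i-1}{y} = c\}$ is an independent set of $\Enc{i-1}$.

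With these facts in hand, I would fix any spanning tree $T$ of $\Enc{i-1}$. For each $c \in \verlist(v_i)$, independence of $V_c$ in $T$ yields the standard tree identity $\kappa(T \setminus V_c) = 1 + \sum_{y \in V_c}(\deg_T(y)-1)$, and since $\Encsub{i-1}{c}$ is a spanning supergraph of $T \setminus V_c$, $\kappa(\Encsub{i-1}{c}) \le \kappa(T \setminus V_c)$. Summing over $c \in \verlist(v_i)$ (each $y$ lies in at most one $V_c$), using $\sum_{y \in V(T)}(\deg_T(y)-1) = |V(\Enc{i-1})|-2$ when $|V(\Enc{i-1})|\ge 2$, and invoking $|\verlist(v_i)| \le \degG(v_i)+1$ from Lemma~\ref{lem:transform}, I obtain $|V(\Enc{i})| \le \sum_c \kappa(\Encsub{i-1}{c}) \le |V(\Enc{i-1})| + \degG(v_i) - 1$; the trivial case $|V(\Enc{i-1})|=1$ is verified by direct inspection. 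Finally, the ``in particular'' bound follows by unrolling the recurrence and applying the handshake identity $\sum_{i=1}^n \degG(v_i) = 2|E(G)| = 2(n-1)$, valid because $G$ is a tree, which yields $|V(\Enc{n})| \le 2n$.
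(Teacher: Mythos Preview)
Your proof is correct and follows essentially the same route as the paper: the leaf case is immediate from $V(\Encp{i})\subseteq V(\Enc{i-1})$, and the spine case is handled by passing to a spanning tree of $\Enc{i-1}$, counting components of $T\setminus V_c$ via the degree-sum identity, and invoking $|\verlist(v_i)|\le \degG(v_i)+1$. If anything, you are slightly more careful than the paper in explicitly justifying that $\labcf{i-1}$ is a proper coloring of $\Enc{i-1}$ (hence each $V_c$ is independent in $T$), which is needed for the identity $\kappa(T\setminus V_c)=1+\sum_{y\in V_c}(\deg_T(y)-1)$; the paper uses this implicitly. Your explicit handshake computation for the $O(n)$ consequence is also a nice addition.
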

	
	By Lemma~\ref{lem:size} each encoding graph $\Enc{i}$ is of size $O(n)$ for each $i \in \{1,2,\ldots, n\}$. 
	Therefore, our algorithm clearly runs in polynomial time. 
	
	This completes the proof of Theorem~\ref{the:caterpillar}.
\qed

\section{Concluding Remarks}

	In this paper, we gave precise analyses of the {\sc $\numk$-list coloring reconfiguration} problem with respect to pathwidth:
the problem is solvable in polynomial time for graphs with pathwidth one, while it is PSPACE-complete for graphs with pathwidth two. 

	Very recently, Wrochna~\cite{Wro14} gave another proof for the PSPACE-completeness of {\sc $\numk$-list coloring reconfiguration} for graphs with pathwidth two. 
	His reduction is constructed from a PSPACE-complete problem, called {\sc $H$-Word Reachability}.

	\subsection*{Acknowledgments}
	We are grateful to Daichi Fukase and Yuma Tamura for fruitful discussions with them. 
	This work is partially supported by JSPS KAKENHI Grant Numbers 25106504 and 25330003.
	

\bibliographystyle{abbrv}

\newpage
\appendix

\section{Proofs Omitted from Section~\ref{sec:algorithm}}

	We first introduce some notation.
	
	For $i \ge 2$, let $\mapj$ be any node (i.e., a $\numk$-list coloring of $G_i$) in the reconfiguration graph $\Ri{i-1}$.
	Recall that the vertex $v_i$ is adjacent with only the spine vertex $\spine{i-1}$ in $G_i$. 
	Let $c$ be any color in $\verlist(v_{i}) \setminus \hset{\mapf(\spine{i-1})}$.
	Then, we say that $\mapj$ can be \textit{extended by} $c$ to a $\numk$-list coloring $\mapi$ of $G_i$ such that $\mapi(v_i) = c$ and $\mapi(v) = \mapj(v)$ for all vertices $v \in V_{i-1}$;
we simply denote such an extension by $\mapj + c=\mapi$.
	Note that $\mapi$ is a node in $\Ri{i}$. 

	For any e-node $y$ in the encoding graph $\Enc{i-1}$ of $\Rizero{i-1}$, recall that $\mapset{i-1}{y}$ is the set of all nodes in $\Rizero{i-1}$ that were contracted into $y$. 
	For a color $c \in \verlist(v_{i})\setminus \hset{\labc{i-1}{y}}$, let $\extend{i-1}{y}{c} =\hset{\mapj + c:\mapj \in \mapset{i-1}{y}}$, that is, $\extend{i-1}{y}{c}$ is the set of nodes in $\Ri{i}$ that are extended by $c$ from nodes in $\mapset{i-1}{y}$ $\bigl( \subseteq V(\Rizero{i-1}) \bigr)$.


\subsection{Proof of Lemma~\ref{lem:leaf}}
	
	In this subsection, we prove Lemma~\ref{lem:leaf}.
	Therefore, suppose that we have already computed $(\Enc{i-1}, \labcf{i-1}, \labinif{i-1}, \labtarf{i-1})$ for $i \ge 2$, and assume that $v_i\in \Vleaf$ and $\verlist(v_i) = \{ c_1, c_2 \}$.

	We first prove that $V(\Enc{i}) \subseteq V(\Encp{i})$ holds, as in the following lemma.
	\begin{lemma} \label{lem:leaf1}
	Let $x$ be any e-node in $\Enc{i}$.
	Then, there exists an e-node $\xhat$ in $V(\Encp{i}) = V(\Encsub{i-1}{c_1}) \cup V(\Encsub{i-1}{c_2})$ such that $\bigcup \bigl\{ \extend{i-1}{\xhat}{c} : c \in \verlist(v_{i}) \setminus \hset{\labc{i-1}{\xhat}} \bigr\} = \mapset{i}{x}$.
	\end{lemma}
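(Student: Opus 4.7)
The plan is to show that for each equivalence class $\mapset{i}{x}$ of $\Rizero{i}$ under $\eqrel{\spine{i}}$, there is a single e-node $\xhat$ of $\Enc{i-1}$ whose extensions by all admissible colors on $v_i$ reconstitute $\mapset{i}{x}$ exactly. The structural fact I will exploit is that in Case~(A) we have $\spine{i}=\spine{i-1}$ and $v_i$ is a leaf whose only neighbor in $G_i$ is $\spine{i-1}$.

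First I would fix any $\mapi_0\in\mapset{i}{x}$, set $\mapj_0:=\mapi_0[V_{i-1}]$, and take $\xhat$ to be the e-node of $\Enc{i-1}$ with $\mapj_0\in\mapset{i-1}{\xhat}$; such an $\xhat$ exists since $\mapj_0\in V(\Rizero{i-1})$ by Lemma~\ref{lem:restrict}. Because $\spine{i-1}=\spine{i}\in V_{i-1}$, we get $\labc{i-1}{\xhat}=\mapj_0(\spine{i-1})=\mapi_0(\spine{i})=\labc{i}{x}$, and since $\mapi_0$ is a proper coloring of $G_i$ we have $\mapi_0(v_i)\neq\labc{i-1}{\xhat}$; since $c_1\neq c_2$, $\xhat\in V(\Enc{i-1})=V(\Encp{i})$. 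Next I would verify that $\xhat$ does not depend on the choice of $\mapi_0$: given another $\mapi'\in\mapset{i}{x}$, there is a path $\hseq{\mapi_0,\mapi_1,\ldots,\mapi_\ell=\mapi'}$ in $\Rizero{i}$ along which $\spine{i}$ is never recolored; restricting each $\mapi_j$ to $V_{i-1}$ and collapsing consecutive duplicates (which occur precisely when the step recolored the leaf $v_i$) yields a walk in $\Rizero{i-1}$ from $\mapj_0$ to $\mapi'[V_{i-1}]$ that never recolors $\spine{i-1}$, so $\mapi'[V_{i-1}]\in\mapset{i-1}{\xhat}$.

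Then I would prove the two inclusions. The direction $\mapset{i}{x}\subseteq\bigcup\bigl\{\extend{i-1}{\xhat}{c}:c\in\verlist(v_i)\setminus\{\labc{i-1}{\xhat}\}\bigr\}$ is immediate: every $\mapi\in\mapset{i}{x}$ decomposes as $\mapi[V_{i-1}]+\mapi(v_i)$ with $\mapi[V_{i-1}]\in\mapset{i-1}{\xhat}$ (by the previous step) and $\mapi(v_i)\in\verlist(v_i)\setminus\{\labc{i-1}{\xhat}\}$. For the reverse inclusion, take any $\mapj\in\mapset{i-1}{\xhat}$ and $c\in\verlist(v_i)\setminus\{\labc{i-1}{\xhat}\}$; a witness path in $\Rizero{i-1}$ from $\mapj_0$ to $\mapj$ preserving the color of $\spine{i-1}$ lifts to a path in $\Ri{i}$ by coloring $v_i$ with $\mapi_0(v_i)$ throughout, since the only neighbor of $v_i$ in $G_i$ is $\spine{i-1}$, whose color is held fixed along the path. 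One extra step then recolors $v_i$ from $\mapi_0(v_i)$ to $c$, which is legal because $c\neq\mapj(\spine{i-1})=\labc{i-1}{\xhat}$. The color of $\spine{i}$ never changes along the lifted path, so $\mapj+c\eqrel{\spine{i}}\mapi_0$, placing $\mapj+c$ in $\mapset{i}{x}$.

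The main obstacle I expect is the bookkeeping in the lift: one must verify that each intermediate coloring remains a proper $\numk$-list coloring of $G_i$ (not merely of $G_{i-1}$) and that the ``$\spine{i}$ is never recolored'' property survives both the lift and the terminal recoloring of $v_i$. Both reduce cleanly to the invariant that $v_i$'s only neighbor in $G_i$ is the spine vertex whose color is held fixed along the path, so once this invariant is stated explicitly the verification is routine.
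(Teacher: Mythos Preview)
Your proposal is correct and follows essentially the same route as the paper: pick a representative coloring in $\mapset{i}{x}$, restrict to $V_{i-1}$ and invoke Lemma~\ref{lem:restrict} to locate $\xhat$ in $\Enc{i-1}=V(\Encp{i})$, then prove both inclusions by restricting/lifting $\eqrel{\spine{i}}$-paths using the fact that $\spine{i}=\spine{i-1}$ and $v_i$'s sole neighbor is $\spine{i-1}$. Your explicit verification that $\xhat$ is independent of the chosen representative and your computation $\labc{i-1}{\xhat}=\labc{i}{x}$ are minor elaborations absent from the paper's write-up, but the underlying argument is identical.
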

	\begin{proof}
	Let $\mapi$ be any node in $\mapset{i}{x} \subseteq V(\Rizero{i})$. 
	Then, by Lemma~\ref{lem:restrict} the node $\mapi[V_{i-1}]$ is contained in $\Rizero{i-1}$, and hence there exists an e-node $\xhat$ in $\Enc{i-1}$ such that $\mapi[V_{i-1}] \in \mapset{i-1}{\xhat}$. 
	Notice that, since $\verlist(v_i) = \{ c_1, c_2 \}$, we have  $V(\Encsub{i-1}{c_1}) \cup V(\Encsub{i-1}{c_2})=V(\Enc{i-1})$.
	Thus, the e-node $\xhat$ is contained in $\Encp{i}$, too. 
	Therefore, we prove that $\bigcup \bigl\{ \extend{i-1}{\xhat}{c} : c \in \verlist(v_{i}) \setminus \hset{\labc{i-1}{\xhat}} \bigr\} = \mapset{i}{x}$.
	
	Let $\mapip$ be any node in $\mapset{i}{x}$. 
	Then, $\mapi \eqrel{\spine{i}} \mapip$, and hence $\Rizero{i}$ contains a path $\hseq{\mapi_1, \mapi_2, \ldots, \mapi_\ell}$ such that $\mapi_1 = \mapi$, $\mapi_{\ell} = \mapip$ and $\mapi_j(\spine{i}) = \mapi(\spine{i}) = \mapip(\spine{i})$ for all $j \in \{1,2,\ldots,\ell\}$. 
	Thus, we can obtain a path $\hseq{\mapi_1[V_{i-1}], \mapi_2[V_{i-1}], \ldots, \mapi_\ell[V_{i-1}]}$ in which the spine vertex $\spine{i}$ always receives the same color $\mapi(\spine{i}) = \mapip(\spine{i})$;
note that $\mapi_j[V_{i-1}] = \mapi_{j+1}[V_{i-1}]$ may hold if $v_i$ is recolored, but we can simply drop $\mapi_{j+1}[V_{i-1}]$ in such a case.  
	Since $\spine{i} = \spine{i-1}$ for the case where $v_i \in \Vleaf$, we have $\mapi[V_{i-1}] \eqrel{\spine{i-1}} \mapip[V_{i-1}]$. 
	Therefore, $\mapip[V_{i-1}] \in \mapset{i-1}{\xhat}$. 
	Since $\labc{i-1}{\xhat}$ represents the color assigned to $\spine{i} = \spine{i-1}$ and $v_i$ is adjacent with $\spine{i-1}$ in $G_i$, the color $\mapip(v_i)$ is clearly contained in $\verlist(v_{i}) \setminus \hset{\labc{i-1}{\xhat}}$. 
	We thus have $\mapip[V_{i-1}] + \mapip(v_i) = \mapip \in \bigcup \bigl\{ \extend{i-1}{\xhat}{c} : c \in \verlist(v_{i}) \setminus \hset{\labc{i-1}{\xhat}} \bigr\}$. 

	Let $\mapipp$ be any node in $\bigcup \bigl\{ \extend{i-1}{\xhat}{c} : c \in \verlist(v_{i}) \setminus \hset{\labc{i-1}{\xhat}} \bigr\}$ such that $\mapipp = \mapj + c$ for some node $\mapj$ in $\mapset{i-1}{\xhat}$ and $c \in \verlist(v_{i}) \setminus \hset{\labc{i-1}{\xhat}}$.
	Since $\mapj \in \mapset{i-1}{\xhat}$ and $\mapi[V_{i-1}] \in \mapset{i-1}{\xhat}$, we have $\mapi[V_{i-1}] \eqrel{\spine{i-1}} \mapj$ and hence $\Rizero{i-1}$ contains a path $\hseq{\mapj_1, \mapj_2, \ldots, \mapj_\ell}$ such that $\mapj_1 = \mapi[V_{i-1}]$, $\mapi_{\ell} = \mapj$ and $\mapj_j(\spine{i-1}) = \mapj(\spine{i-1})$ for all $j \in \{1,2,\ldots,\ell\}$. 
	Since $\spine{i} = \spine{i-1}$ and $\mapi(v_i) \in \verlist(v_{i}) \setminus \hset{\labc{i-1}{\xhat}}$, the sequence $\hseq{\mapj_1 + \mapi(v_i), \mapj_2 + \mapi(v_i), \ldots, \mapj_\ell+\mapi(v_i)}$ is a path in $\Rizero{i}$. 
	If $\mapi(v_i) \neq c$, we add one more adjacent node $\mapj_{\ell} + c$ to the last. 
	Since $\mapj_1 + \mapi(v_i) = \mapi$ and $\mapj_{\ell}+c = \mapipp$, we thus have $\mapi \eqrel{\spine{i}} \mapipp$ and hence $\mapipp \in \mapset{i}{x}$. 
	\qed
	\end{proof}
	
	By Lemma~\ref{lem:leaf1} we may identify each e-node $x$ in $\Enc{i}$ with the corresponding e-node $\xhat$ in $\Encp{i}$. 

	We then prove the following lemma.
	\begin{lemma} \label{lem:leaf2}
	Let $x$ and $y$ be two e-nodes in $\Enc{i}$, and let $\xhat$ and $\yhat$ be two e-nodes in $\Encp{i}$ corresponding to $x$ and $y$, respectively.
	Then, $\xhat \yhat \in E(\Encp{i}) = E(\Encsub{i-1}{c_1}) \cup E(\Encsub{i-1}{c_2})$ if and only if $xy \in E(\Enc{i})$. 
	\end{lemma}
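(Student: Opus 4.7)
My plan is to prove the biconditional by translating adjacency in $\Rizero{i}$ into adjacency in $\Rizero{i-1}$ and vice versa, exploiting two facts specific to the leaf case: $\spine{i}=\spine{i-1}$, and the only neighbour of $v_i$ in $G_i$ is $\spine{i-1}$. Throughout I will use the identification $\mapset{i}{x}=\bigcup \bigl\{ \extend{i-1}{\xhat}{c} : c \in \verlist(v_i)\setminus\hset{\labc{i-1}{\xhat}}\bigr\}$ supplied by Lemma~\ref{lem:leaf1}. The main subtlety is to guarantee that the witnessing adjacencies (after restriction or extension) live inside the connected component $\Rizero{i-1}$ or $\Rizero{i}$, rather than merely in $\Ri{i-1}$ or $\Ri{i}$; this is handled by Lemma~\ref{lem:restrict} on the restriction side and by Lemma~\ref{lem:leaf1} on the extension side, so once those tools are in place the rest is bookkeeping.

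For the direction $xy\in E(\Enc{i})\Rightarrow \xhat\yhat\in E(\Encp{i})$, I would take adjacent nodes $\mapi\in\mapset{i}{x}$ and $\mapip\in\mapset{i}{y}$ in $\Rizero{i}$. Because $x\neq y$ they belong to distinct $\eqrel{\spine{i}}$-classes, so the single recoloring step between them must recolor $\spine{i}=\spine{i-1}$; in particular $\mapi(v_i)=\mapip(v_i)=c$ for some $c\in\verlist(v_i)=\{c_1,c_2\}$. The restrictions $\mapj=\mapi[V_{i-1}]$ and $\mapj'=\mapip[V_{i-1}]$ still differ only on $\spine{i-1}$, and by Lemma~\ref{lem:restrict} both lie in $\Rizero{i-1}$, so they are adjacent there. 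Applying Lemma~\ref{lem:leaf1} places $\mapj\in\mapset{i-1}{\xhat}$ and $\mapj'\in\mapset{i-1}{\yhat}$, giving $\xhat\yhat\in E(\Enc{i-1})$. Finally, since $v_i$ is adjacent to $\spine{i-1}$ in $G_i$ and $\mapi$, $\mapip$ both assign $c$ to $v_i$, we have $\labc{i-1}{\xhat}=\mapj(\spine{i-1})\neq c$ and $\labc{i-1}{\yhat}\neq c$, so the edge $\xhat\yhat$ survives in $\Encsub{i-1}{c}\subseteq \Encp{i}$.

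For the reverse direction, suppose $\xhat\yhat\in E(\Encsub{i-1}{c})$ for some $c\in\{c_1,c_2\}$. Unpacking the definitions, there exist adjacent nodes $\mapj\in\mapset{i-1}{\xhat}$ and $\mapj'\in\mapset{i-1}{\yhat}$ in $\Rizero{i-1}$ that differ only on $\spine{i-1}$, and neither $\labc{i-1}{\xhat}$ nor $\labc{i-1}{\yhat}$ equals $c$. I would then form the extensions $\mapj+c$ and $\mapj'+c$: both are valid $\numk$-list colorings of $G_i$ (the colour $c$ conflicts with neither $\mapj(\spine{i-1})$ nor $\mapj'(\spine{i-1})$), and they differ only on $\spine{i-1}=\spine{i}$, hence are adjacent in $\Ri{i}$. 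By Lemma~\ref{lem:leaf1} these extensions lie in $\mapset{i}{x}$ and $\mapset{i}{y}$ respectively, which are subsets of $V(\Rizero{i})$, so the edge occurs inside $\Rizero{i}$ and yields $xy\in E(\Enc{i})$, completing the argument.
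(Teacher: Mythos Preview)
Your proof is correct and follows essentially the same approach as the paper's: both directions are handled by restricting or extending a witnessing adjacent pair across the leaf $v_i$, using Lemma~\ref{lem:leaf1} to identify $\mapset{i}{x}$ with the appropriate extensions of $\mapset{i-1}{\xhat}$, and observing that the single recolored vertex must be $\spine{i}=\spine{i-1}$. The only cosmetic difference is that you invoke Lemma~\ref{lem:restrict} explicitly to place the restrictions inside $\Rizero{i-1}$, whereas the paper extracts this directly from the statement of Lemma~\ref{lem:leaf1}.
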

	\begin{proof}
	We first prove the only-if-part.
	Suppose that $\xhat \yhat \in E(\Encsub{i-1}{c_1})$; it is symmetric for the other case where $\xhat \yhat \in E(\Encsub{i-1}{c_2})$.
	Then, there exist two adjacent nodes $\mapj_x \in \mapset{i-1}{\xhat}$ and $\mapj_y \in \mapset{i-1}{\yhat}$ such that $\mapj_x(\spine{i-1}) \neq c_1$ and $\mapj_y(\spine{i-1}) \neq c_1$.
	Therefore, by Lemma~\ref{lem:leaf1} we have $\mapj_x+c_1\in \mapset{i}{x}$ and $\mapj_y+c_1\in \mapset{i}{y}$.
	Note that, since $\xhat \neq \yhat$, we know that only the spine vertex $\spine{i-1} = \spine{i}$ is recolored between $\mapj_x$ and $\mapj_y$. 
	Thus, $\mapj_x+c_1$ and $\mapj_y+c_1$ are adjacent in $\Rizero{i}$, and hence we have $xy\in E(\Enc{i})$.

	We then prove the if-part.
	Since $xy\in E(\Enc{i})$, there exist two adjacent nodes $\mapi_x \in \mapset{i}{x}$ and $\mapi_y \in \mapset{i}{y}$ in $\Rizero{i}$.
	Since $x \neq y$, only the spine vertex $\spine{i}$ is recolored between $\mapi_x$ and $\mapi_y$, and hence $\mapi_x(v_i) = \mapi_y(v_i)$.
	Therefore, $\mapi_x[V_{i-1}]$ and $\mapi_y[V_{i-1}]$ are adjacent. 
	We assume that $c_1 = \mapi_x(v_i) = \mapi_y(v_i)$ without loss of generality.
	Then, since $v_i$ and $\spine{i}$ are adjacent, $\mapi_x(\spine{i}) \neq c_1$ and $\mapi_y(\spine{i}) \neq c_1$. 
	By Lemma~\ref{lem:leaf1} we have $\mapi_x \in \bigcup \bigl\{ \extend{i-1}{\xhat}{c} : c \in \verlist(v_{i}) \setminus \hset{\labc{i-1}{\xhat}} \bigr\}$;
this implies that $\mapi_x[V_{i-1}] \in \mapset{i-1}{\xhat}$.
	Similarly, $\mapi_y[V_{i-1}] \in \mapset{i-1}{\yhat}$. 
	Since $\mapi_x[V_{i-1}]$ and $\mapi_y[V_{i-1}]$ are adjacent, $\xhat \yhat \in E(\Enc{i-1})$.
	Furthermore, since $\labc{i-1}{\xhat} \neq c_1$ and $\labc{i-1}{\yhat} \neq c_1$, we have $\xhat \yhat \in E(\Encsub{i-1}{c_1})$.
	Therefore, $\xhat \yhat \in E(\Encsub{i-1}{c_1}) \subseteq E(\Encp{i})$. 
%
%
%
	\qed
	\end{proof}

	We now prove the following lemma.
%
	\begin{lemma} \label{lem:leaf3}
	$\Enct{i} = \Enc{i}$.
	\end{lemma}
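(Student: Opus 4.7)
The plan is to extract from Lemmas~\ref{lem:leaf1} and~\ref{lem:leaf2} an injective map $\phi\colon V(\Enc{i}) \to V(\Encp{i})$ that realizes $\Enc{i}$ as an induced subgraph of $\Encp{i}$, and then to show that the image coincides with the connected component $\Enct{i}$. For each $x\in V(\Enc{i})$, Lemma~\ref{lem:leaf1} produces an e-node $\xhat \in V(\Encp{i})$ with $\mapset{i}{x} = \bigcup\bigl\{\extend{i-1}{\xhat}{c} : c \in \verlist(v_i)\setminus\{\labc{i-1}{\xhat}\}\bigr\}$; set $\phi(x) := \xhat$. Injectivity is immediate: distinct e-nodes of $\Enc{i}$ index disjoint equivalence classes $\mapset{i}{\cdot}$, so they cannot share the same right-hand side. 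Lemma~\ref{lem:leaf2} then gives $xy \in E(\Enc{i}) \iff \phi(x)\phi(y) \in E(\Encp{i})$, so $\phi$ is an isomorphism from $\Enc{i}$ onto the induced subgraph of $\Encp{i}$ on $\phi(V(\Enc{i}))$.

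Next I will show $\phi(V(\Enc{i})) \subseteq V(\Enct{i})$. Since $\Rizero{i}$ is connected and $\Enc{i}$ is its quotient under $\eqrel{\spine{i}}$, the graph $\Enc{i}$ is connected. Let $x_\ini \in V(\Enc{i})$ be the unique e-node with $\mapf_\ini[V_i] \in \mapset{i}{x_\ini}$. Then $\mapf_\ini[V_{i-1}] \in \mapset{i-1}{\phi(x_\ini)}$, so $\labini{i-1}{\phi(x_\ini)} = 1$; this is precisely the e-node used in the definition of $\Enct{i}$. Because $\phi(V(\Enc{i}))$ is a connected subset of $V(\Encp{i})$ containing this distinguished e-node, it lies inside the connected component $\Enct{i}$.

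For the reverse inclusion, I proceed by induction along any path $\phi(x_\ini) = \xhat_0, \xhat_1, \ldots, \xhat_k$ in $\Enct{i}$, showing each $\xhat_j \in \phi(V(\Enc{i}))$. The inductive step is the key calculation: the edge $\xhat_{j-1}\xhat_j$ lies in $E(\Encsub{i-1}{c_\alpha})$ for some $\alpha \in \{1,2\}$, so $\labc{i-1}{\xhat_{j-1}}, \labc{i-1}{\xhat_j} \neq c_\alpha$. One can then choose adjacent witnesses $\mapj_{j-1} \in \mapset{i-1}{\xhat_{j-1}}$ and $\mapj_j \in \mapset{i-1}{\xhat_j}$ (they differ only at $\spine{i-1}$), and extending both by $c_\alpha$ produces adjacent nodes $\mapj_{j-1}+c_\alpha$ and $\mapj_j+c_\alpha$ of $\Ri{i}$. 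By the inductive hypothesis $\mapj_{j-1}+c_\alpha \in \mapset{i}{\phi^{-1}(\xhat_{j-1})} \subseteq V(\Rizero{i})$, and therefore its neighbor $\mapj_j+c_\alpha$ also lies in $V(\Rizero{i})$, hence in some $\mapset{i}{x_j}$. The defining identity of $\phi$ together with injectivity forces $\phi(x_j) = \xhat_j$, closing the induction.

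Combining these three steps gives $\phi(V(\Enc{i})) = V(\Enct{i})$, and together with Lemma~\ref{lem:leaf2} this yields the graph isomorphism $\Enc{i} \cong \Enct{i}$, which after the identification becomes equality. The main obstacle is the inductive lifting in Step~3: one must verify that a recoloring edge entirely within $G_{i-1}$ correctly lifts to a recoloring edge in $G_i$ once $v_i$ is colored $c_\alpha$, and this is exactly where the restriction $c_\alpha \notin \{\labc{i-1}{\xhat_{j-1}}, \labc{i-1}{\xhat_j}\}$ enforced by working in $\Encsub{i-1}{c_\alpha}$ is essential --- it guarantees that $c_\alpha$ never conflicts with the color of $\spine{i-1}$ during the recoloring.
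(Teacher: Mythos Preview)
Your proof is correct and follows essentially the same approach as the paper. The paper also identifies each $x\in V(\Enc{i})$ with its corresponding $\xhat\in V(\Encp{i})$ via Lemma~\ref{lem:leaf1}, shows $V(\Enc{i})\subseteq V(\Enct{i})$ by connectedness through the initial e-node, and then rules out any edge from $V(\Enc{i})$ to $V(\Enct{i})\setminus V(\Enc{i})$ by the same lifting argument you use in your inductive step (extend adjacent witnesses in $\mapset{i-1}{\cdot}$ by $c_\alpha$ to obtain adjacent nodes of $\Rizero{i}$); your induction along a path is just the contrapositive of their ``no boundary edge'' contradiction.
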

	\begin{proof}
	Recall that $\Enc{i}$ consists of a single connected component which contains the e-node $z$ such that $\mapf_{\ini}[V_{i}] \in \mapset{i}{z}$.
	Consider the set of all e-nodes $\xhat$ in $\Encp{i}$ that correspond to the e-nodes $x$ in $\Enc{i}$.
	By Lemma~\ref{lem:leaf2} the e-node set forms a connected subgraph of a single connected component in $\Encp{i}$.
	Furthermore, by Lemma~\ref{lem:leaf1} the component in $\Encp{i}$ contains the e-node $\zhat$ such that $\mapf_{\ini}[V_i] \in \mapset{i}{\zhat}$; 
by the construction, $\zhat$ is contained in $\Enct{i}$. 
	We thus have $V(\Enc{i}) \subseteq V(\Enct{i})$.

	Therefore, to show $\Enc{i} = \Enct{i}$, by Lemma~\ref{lem:leaf2} it suffices to prove that there exists no edge $x \yhat \in E(\Enct{i})$ which joins two e-nodes $x \in V(\Enc{i})$ and $\yhat \in V(\Enct{i}) \setminus V(\Enc{i})$.
	Suppose for a contradiction that there exists such an edge $x \yhat \in E(\Enct{i})$.
	By the construction, $x \yhat \in E(\Encsub{i-1}{c_1}) \cup E(\Encsub{i-1}{c_2})$; we may assume that $x \yhat \in E(\Encsub{i-1}{c_1})$ without loss of generality.
	Then, there exist two adjacent nodes $\mapj_x \in \mapset{i-1}{x}$ and $\mapj_{\yhat} \in \mapset{i-1}{\yhat}$ such that  $\mapj_x(\spine{i-1}) \neq c_1$ and $\mapj_{\yhat}(\spine{i-1}) \neq c_1$.
	Therefore $\mapj_x$ and $\mapj_{\yhat}$ can be extended by $c_1$, and $\mapj_x+c_1\in \mapset{i}{x}$ and $\mapj_{\yhat}+c_1\in \mapset{i}{\yhat}$ are adjacent in $\Rizero{i}$.
	By Lemma~\ref{lem:leaf1} we then have $\yhat \in V(\Enc{i})$; this contradicts the assumption that $\yhat \in V(\Enct{i}) \setminus V(\Enc{i})$.
	\qed
	\end{proof}
		
	Finally, we show the following lemma. 
	\begin{lemma} \label{lem:leaf4}
	$\labcft{i} = \labcf{i}$, $\labinift{i}=\labinif{i}$ and $\labtarft{i}=\labtarf{i}$.
	\end{lemma}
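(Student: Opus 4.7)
The plan is to leverage Lemmas~\ref{lem:leaf1}--\ref{lem:leaf3} to transfer labels from $\Enc{i-1}$ to $\Enc{i}$. By Lemma~\ref{lem:leaf3} we have $\Enct{i}=\Enc{i}$, and by Lemma~\ref{lem:leaf1} every e-node $x\in V(\Enc{i})$ is identified with a unique e-node $\xhat\in V(\Enc{i-1})$ satisfying $\mapset{i}{x}=\bigcup\bigl\{\extend{i-1}{\xhat}{c}:c\in \verlist(v_i)\setminus\hset{\labc{i-1}{\xhat}}\bigr\}$. The algorithm in the leaf case sets $\labct{i}{x}=\labc{i-1}{\xhat}$, $\labinit{i}{x}=\labini{i-1}{\xhat}$, and $\labtart{i}{x}=\labtar{i-1}{\xhat}$, so the task reduces to checking that each equals the value dictated by $\mapset{i}{x}$ in the definition of $(\Enc{i},\labcf{i},\labinif{i},\labtarf{i})$.

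For the color label I would use that, since $v_i$ is a leaf, $\spine{i}=\spine{i-1}$. Any $\mapi\in\mapset{i}{x}$ has the form $\mapj+c$ for some $\mapj\in\mapset{i-1}{\xhat}$ and some $c\in \verlist(v_i)\setminus\hset{\labc{i-1}{\xhat}}$, and hence $\mapi(\spine{i})=\mapj(\spine{i-1})=\labc{i-1}{\xhat}$. This is independent of the chosen representative, so $\labc{i}{x}=\labc{i-1}{\xhat}=\labct{i}{x}$.

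For the initial label I would establish the biconditional $\mapf_{\ini}[V_i]\in\mapset{i}{x}\iff \mapf_{\ini}[V_{i-1}]\in\mapset{i-1}{\xhat}$. The forward direction is immediate from the above decomposition: if $\mapf_{\ini}[V_i]=\mapj+c$ with $\mapj\in\mapset{i-1}{\xhat}$, then restricting to $V_{i-1}$ forces $\mapj=\mapf_{\ini}[V_{i-1}]$. For the reverse direction, observe that $\mapf_{\ini}$ is a proper list coloring and $v_i\spine{i-1}\in E(G_i)$, so $\mapf_{\ini}(v_i)\neq \mapf_{\ini}(\spine{i-1})=\labc{i-1}{\xhat}$; therefore $\mapf_{\ini}(v_i)\in \verlist(v_i)\setminus\hset{\labc{i-1}{\xhat}}$, and $\mapf_{\ini}[V_i]=\mapf_{\ini}[V_{i-1}]+\mapf_{\ini}(v_i)\in \extend{i-1}{\xhat}{\mapf_{\ini}(v_i)}\subseteq \mapset{i}{x}$. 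Combining these gives $\labini{i}{x}=1\iff\labini{i-1}{\xhat}=1=\labinit{i}{x}$, and the argument for $\labtarf{i}$ is verbatim the same with $\tar$ in place of $\ini$.

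I do not expect a genuine obstacle here, since this lemma is essentially a bookkeeping check once Lemmas~\ref{lem:leaf1}--\ref{lem:leaf3} are in place. The only subtlety is the reverse direction of the biconditional for the initial and target labels, where one must invoke the fact that $\mapf_{\ini}$ (respectively $\mapf_{\tar}$) is a proper list coloring in order to guarantee that the color it assigns to $v_i$ is an admissible extension color; this is exactly where the edge $v_i\spine{i-1}$ and the equality $\spine{i}=\spine{i-1}$ (specific to the leaf case) are used.
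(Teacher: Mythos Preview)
Your proposal is correct and follows the same approach as the paper: the paper's proof simply notes that $\spine{i}=\spine{i-1}$ in the leaf case and then says the lemma ``follows immediately from Lemma~\ref{lem:leaf1}.'' Your argument is just a careful unpacking of what ``immediately'' means here, including the explicit check that $\mapf_{\ini}(v_i)$ is an admissible extension color, so there is no substantive difference.
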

	\begin{proof}
	Recall that $\spine{i} = \spine{i-1}$ if $v_i \in \Vleaf$. 
	Then, the lemma follows immediately from Lemma~\ref{lem:leaf1}.
	\qed
	\end{proof}
	
	This completes the proof of Lemma~\ref{lem:leaf}.

\subsection{Proof of Lemma~\ref{lem:spine}}
	
	In this subsection, we prove Lemma~\ref{lem:spine}.
	Therefore, suppose that we have already computed $(\Enc{i-1}, \labcf{i-1}, \labinif{i-1}, \labtarf{i-1})$ for $i \ge 2$, and assume that $v_i\in \Vspine$.
	
	We first prove that $V(\Enc{i}) \subseteq V(\Encp{i})$ holds, as in the following lemma.
	\begin{lemma} \label{lem:spine1}
	Let $x$ be any e-node in $\Enc{i}$ with $\labc{i}{x}=c$.
	Then, there exists exactly one connected component $H$ in $\Encsub{i-1}{c}$ such that $\bigcup \hset{\extend{i-1}{y}{c}:y\in V(H)}=\mapset{i}{x}$.
	\end{lemma}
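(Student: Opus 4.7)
The plan is to mirror the structure of the proof of Lemma~\ref{lem:leaf1}, but with the crucial twist that now $\spine{i} = v_i \neq \spine{i-1}$, so the equivalence relation defining the e-nodes genuinely changes between $\Enc{i-1}$ and $\Enc{i}$. First I would set up a natural projection $\pi$ from $\mapset{i}{x}$ into $V(\Encsub{i-1}{c})$: every $\mapi \in \mapset{i}{x}$ restricts to an element $\mapi[V_{i-1}]$ of $\Rizero{i-1}$ by Lemma~\ref{lem:restrict}, and since $v_i \spine{i-1}$ is an edge of $G_i$ with $\mapi(v_i) = c$, we are forced to have $\mapi(\spine{i-1}) \neq c$. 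Hence the e-node $\pi(\mapi)$ of $\Enc{i-1}$ that contains $\mapi[V_{i-1}]$ has $\labc{i-1}{\pi(\mapi)} \neq c$ and lies in $\Encsub{i-1}{c}$.

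Next I would show that $\pi$ sends $\mapset{i}{x}$ into a single connected component $H$ of $\Encsub{i-1}{c}$. Given $\mapi, \mapi' \in \mapset{i}{x}$, take a path in $\Rizero{i}$ between them along which $v_i$ is never recolored away from $c$; its restriction to $V_{i-1}$ is a path in $\Ri{i-1}$ on which $\spine{i-1}$ can never receive color $c$ because $v_i \spine{i-1}$ is an edge and $v_i$ is colored $c$. Contracting each $\eqrel{\spine{i-1}}$-equivalence class projects this restricted path to a walk in $\Enc{i-1}$ from $\pi(\mapi)$ to $\pi(\mapi')$ that stays in $\Encsub{i-1}{c}$, so the two e-nodes belong to the same component; take that component as $H$.

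One inclusion is then immediate: each $\mapi \in \mapset{i}{x}$ equals $\mapi[V_{i-1}] + c$ with $\mapi[V_{i-1}] \in \mapset{i-1}{\pi(\mapi)}$ and $\pi(\mapi) \in V(H)$, so $\mapset{i}{x} \subseteq \bigcup \hset{\extend{i-1}{y}{c} : y \in V(H)}$. The reverse inclusion is the main obstacle, since it demands lifting a walk in $\Encsub{i-1}{c}$ back to a concrete path in $\Rizero{i-1}$. Fixing $\mapi \in \mapset{i}{x}$ and an arbitrary $\mapj \in \mapset{i-1}{y}$ with $y \in V(H)$, I would walk in $\Encsub{i-1}{c}$ from $\pi(\mapi)$ to $y$, choose for each traversed edge a witnessing pair of adjacent nodes in $\Rizero{i-1}$, and splice them within each visited e-node using $\eqrel{\spine{i-1}}$-paths (which keep $\spine{i-1}$'s color fixed by definition). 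This produces a path in $\Rizero{i-1}$ from $\mapi[V_{i-1}]$ to $\mapj$ whose values at $\spine{i-1}$ take only colors $\neq c$. Extending every node of this path by $c$ at $v_i$ is therefore legal and yields a path in $\Ri{i}$ from $\mapi$ to $\mapj + c$ along which $v_i$ remains colored $c$ throughout; since one endpoint lies in $\Rizero{i}$, the whole path does, giving $\mapj + c \eqrel{\spine{i}} \mapi$ and hence $\mapj + c \in \mapset{i}{x}$. Uniqueness of $H$ follows because any second component $H'$ satisfying the same equation would have to contain $\pi(\mapi)$ for every $\mapi \in \mapset{i}{x}$, forcing $H' = H$.
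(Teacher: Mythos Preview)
Your proposal is correct and follows essentially the same route as the paper's proof: fix a node in $\mapset{i}{x}$, project to $\Encsub{i-1}{c}$ to identify the component $H$, push paths down for one inclusion and lift paths up (extending by $c$) for the other. The only cosmetic difference is that for the reverse inclusion you explicitly build the lifted path by choosing edge witnesses and splicing with $\eqrel{\spine{i-1}}$-paths, whereas the paper simply observes that the subgraph of $\Rizero{i-1}$ induced by $\bigcup\{\mapset{i-1}{y}:y\in V(H)\}$ is connected; these are the same argument at different levels of detail.
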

	\begin{proof}
	Let $\mapi$ be any node in $\mapset{i}{x} \subseteq V(\Rizero{i})$. 
	Then, by Lemma~\ref{lem:restrict} the node $\mapi[V_{i-1}]$ is contained in $\Rizero{i-1}$, and hence there exists an e-node $z$ in $\Enc{i-1}$ such that $\mapi[V_{i-1}] \in \mapset{i-1}{z}$.
	Since $\spine{i} \neq \spine{i-1}$ and $\spine{i}$ is adjacent with $\spine{i-1}$, the assumption $\labc{i}{x}=c$ implies that $\mapi(\spine{i}) = c$ and hence $\mapi(\spine{i-1}) \neq c$.
	Therefore, we have $\labc{i-1}{z} \neq c$, and hence $\Encsub{i-1}{c}$ has exactly one connected component $H$ that contains $z$.
	We thus prove that $\bigcup \hset{\extend{i-1}{y}{c}:y\in V(H)} = \mapset{i}{x}$ for the connected component $H$.
	
	Let $\mapip$ be any node in $\mapset{i}{x}$. 
	Then, $\mapip(\spine{i}) = c$ and hence it suffices to show that $H$ contains an e-node $y$ such that $\mapip[V_{i-1}] \in \mapset{i-1}{y}$. 
	Since $\mapi \eqrel{\spine{i}} \mapip$, there exists a path $\hseq{\mapi_1, \mapi_2, \ldots, \mapi_\ell}$ in $\Rizero{i}$ such that $\mapi_1 = \mapi$, $\mapi_{\ell} = \mapip$ and $\mapi_j(\spine{i}) = \mapi(\spine{i}) = \mapip(\spine{i}) = c$ for all $j \in \{1,2,\ldots,\ell\}$. 
	Since $\spine{i}$ is adjacent with $\spine{i-1}$, $\mapi_j(\spine{i-1}) \neq c$ holds for all $j \in \{1,2,\ldots,\ell\}$.
	Therefore, there exists a connected component $H^\prime$ in $\Encsub{i-1}{c}$ such that the path $\hseq{\mapi_1[V_{i-1}], \mapi_2[V_{i-1}], \ldots, \mapi_\ell[V_{i-1}]}$ is contained in $\bigcup \hset{\mapset{i-1}{y^\prime} : y^\prime \in V(H^\prime)}$.
	Because $z \in V(H)$ and $\mapi_1[V_{i-1}] = \mapi[V_{i-1}] \in \mapset{i-1}{z}$, we have $H^\prime = H$.
	Thus, $H$ contains an e-node $y$ such that $\mapip[V_{i-1}] \in \mapset{i-1}{y}$. 
	Then, we have $\mapip \in \bigcup \hset{\extend{i-1}{y}{c}:y\in V(H)}$.


	Conversely, let $\mapipp$ be any node in $\bigcup \hset{\extend{i-1}{y}{c}:y\in V(H)}$.
	Since $H$ is connected, the subgraph of $\Rizero{i-1}$ induced by  $\bigcup \hset{\mapset{i-1}{y} : y\in V(H)}$ is connected, too.
	Then, the induced subgraph contains a path $\hseq{\mapj_1, \mapj_2, \ldots, \mapj_\ell}$ such that $\mapj_1 = \mapi[V_{i-1}]$ and $\mapj_{\ell} = \mapipp[V_{i-1}]$. 
	Furthermore, since $H$ is a connected component in $\Encsub{i-1}{c}$, we know that $\mapj_j(\spine{i-1}) \neq c$ for all nodes $\mapj_j$, $j \in \{1,2,\ldots,\ell\}$. 
	Therefore, we can extend each node $\mapj_j$ by $c$, and obtain a path $\hseq{\mapj_1+c, \mapj_2+c, \ldots, \mapj_\ell+c}$.
	Since $\mapj_1+c = \mapi$ and $\mapj_\ell+c = \mapipp$, we thus have $\mapi \eqrel{\spine{i}} \mapipp$.
	Since $\mapi \in \mapset{i}{x}$, we have $\mapipp \in \mapset{i}{x}$.
	\qed
	\end{proof}

	For each e-node $x\in V(\Enc{i})$, let $H_x$ be the connected component in $\Encsub{i-1}{\labc{i}{x}}$ which satisfies Lemma~\ref{lem:spine1}.
	Then, we can identify the e-node $x$ in $\Enc{i}$ with the e-node $\xhat$ in $\Encp{i}$ such that $\nset{\xhat}=V(H_x)$.
	
	We then prove the following lemma.
	\begin{lemma} \label{lem:spine2}
	Let $x$ and $y$ be two e-nodes in $\Enc{i}$, and let $\xhat$ and $\yhat$ be two e-nodes in $\Encp{i}$ corresponding to $x$ and $y$, respectively.
	Then, $\nset{\xhat} \cap \nset{\yhat} \neq \emptyset$ if and only if $xy\in E(\Enc{i})$.
	\end{lemma}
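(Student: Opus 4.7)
The plan is to exploit the fact that $v_i = \spine{i}$ in Case~(B), so any edge of $\Enc{i}$ between two distinct e-nodes $x$ and $y$ must correspond to recoloring precisely $v_i$; this will let me translate adjacency in $\Enc{i}$ into a shared e-node in $\Enc{i-1}$ and vice versa.

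For the only-if direction, I would start from $xy \in E(\Enc{i})$, which gives two adjacent nodes $\mapi_x \in \mapset{i}{x}$ and $\mapi_y \in \mapset{i}{y}$ of $\Rizero{i}$. Since $x \ne y$ means $\mapi_x \not\eqrel{\spine{i}} \mapi_y$, the single vertex being recolored between $\mapi_x$ and $\mapi_y$ must be $\spine{i} = v_i$; hence $\mapi_x[V_{i-1}] = \mapi_y[V_{i-1}]$, call it $\mapj$. Because $v_i$ is adjacent to $\spine{i-1}$ in $G_i$, the color $\mapj(\spine{i-1})$ can equal neither $\labc{i}{x}$ nor $\labc{i}{y}$. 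Picking the e-node $z$ of $\Enc{i-1}$ with $\mapj \in \mapset{i-1}{z}$, this means $z$ survives in both $\Encsub{i-1}{\labc{i}{x}}$ and $\Encsub{i-1}{\labc{i}{y}}$, and Lemma~\ref{lem:spine1} applied to $\mapi_x = \mapj + \labc{i}{x} \in \mapset{i}{x}$ forces $z \in V(H_x) = \nset{\xhat}$; symmetrically $z \in \nset{\yhat}$, giving $\nset{\xhat} \cap \nset{\yhat} \ne \emptyset$.

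For the if direction, I first observe that $\nset{\xhat}$ and $\nset{\yhat}$ are whole connected components of $\Encsub{i-1}{\labc{i}{x}}$ and $\Encsub{i-1}{\labc{i}{y}}$ respectively; distinct components of the same subgraph are vertex-disjoint, so a non-empty intersection forces $\labc{i}{x} \ne \labc{i}{y}$. Picking any $z \in \nset{\xhat} \cap \nset{\yhat}$ and any $\mapj \in \mapset{i-1}{z}$, the spine color $\mapj(\spine{i-1}) = \labc{i-1}{z}$ differs from both $\labc{i}{x}$ and $\labc{i}{y}$, so the extensions $\mapj + \labc{i}{x}$ and $\mapj + \labc{i}{y}$ are legitimate $\numk$-list colorings of $G_i$ (the only edge from $v_i$ in $G_i$ goes to $\spine{i-1}$). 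These two extensions differ only at $v_i$, so they are adjacent in $\Ri{i}$, and by Lemma~\ref{lem:spine1} they lie in $\mapset{i}{x}$ and $\mapset{i}{y}$, respectively. Since $\mapf_{\ini}[V_i] \in \mapset{i}{z'}$ for some $z' \in V(\Enc{i})$ and both extensions are connected in $\Ri{i}$ to $\mapset{i}{x}, \mapset{i}{y}$ which are assumed to be nodes of $\Rizero{i}$, this adjacency sits inside $\Rizero{i}$, yielding $xy \in E(\Enc{i})$.

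The only mildly subtle point, and thus the place I'd be most careful, is the disjointness step in the if direction: one needs to rule out the degenerate situation $\labc{i}{x} = \labc{i}{y}$, because otherwise $\nset{\xhat}, \nset{\yhat}$ live in the same $\Encsub{i-1}{\labc{i}{x}}$ and their non-empty intersection would be incompatible with their being different connected components. Once this observation is in place, both directions reduce cleanly to tracking which color is assigned to $v_i$ versus $\spine{i-1}$, and invoking Lemma~\ref{lem:spine1} to pass between $\mapset{i-1}{\cdot}$ and $\mapset{i}{\cdot}$.
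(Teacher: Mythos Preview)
Your argument is correct and matches the paper's proof almost line for line: both directions hinge on the observation that an edge of $\Enc{i}$ between distinct e-nodes corresponds to recoloring exactly $v_i=\spine{i}$, and then invoke Lemma~\ref{lem:spine1} to pass between $\mapset{i}{\cdot}$ and $\mapset{i-1}{\cdot}$. The only discrepancy is that you have swapped the names of the two directions (what you call ``only-if'' is the paper's ``if'' and vice versa, given how the biconditional is stated); your extra explicit justification that $\labc{i}{x}\neq\labc{i}{y}$ via disjointness of connected components is a nice clarification the paper leaves implicit.
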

	\begin{proof}
	We first prove the only-if-part.
	Suppose that the set $\nset{\xhat} \cap \nset{\yhat}$ contains an e-node $a$ in $\Enc{i-1}$.
	Choose an arbitrary node $\mapj \in \mapset{i-1}{a}$, then two nodes $\mapj+\labc{i}{x} \in \mapset{i}{x}$ and $\mapj+\labc{i}{y} \in \mapset{i}{y}$ are adjacent in $\Rizero{i}$.
	Thus, $xy\in E(\Enc{i})$.

	We then prove the if-part.
	Suppose that $xy\in E(\Enc{i})$, then there exist two adjacent nodes $\mapi_x\in \mapset{i}{x}$ and $\mapi_y\in \mapset{i}{y}$ in $\Rizero{i}$.
	Since $\mapi_x$ and $\mapi_y$ are adjacent and $\labc{i}{x}\neq \labc{i}{y}$, we know that $\mapi_x[V_{i-1}] = \mapi_y[V_{i-1}]$. 
	By Lemma~\ref{lem:spine1}, there exists an e-node $a\in \nset{\xhat}$ such that $\mapi_x[V_{i-1}] \in \mapset{i-1}{a}$.
	Similarly, there exists an e-node $b\in \nset{\yhat}$ such that $\mapi_y[V_{i-1}] \in \mapset{i-1}{b}$.
	Since $\mapi_x[V_{i-1}] = \mapi_y[V_{i-1}]$, we have $a=b$.
	Therefore, $a = b\in \nset{\xhat} \cap \nset{\yhat} \neq \emptyset$.
	\qed
	\end{proof}
	
	By Lemmas \ref{lem:spine1} and \ref{lem:spine2}, we have $\Enc{i}\subseteq \Encp{i}$ and $\labcf{i}=\labcft{i}$.
	
	We now prove the following lemma.
	\begin{lemma} \label{lem:spine4}
	$\labinif{i}=\labinift{i}$ and $\labtarf{i}=\labtarft{i}$.
	\end{lemma}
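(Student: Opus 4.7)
The plan is to verify both label equalities pointwise over the e-nodes $x$ of $\Enc{i}$, handling $\labinif{i}$ first; the $\labtarf{i}$ case is entirely analogous. The main tool is Lemma~\ref{lem:spine1}, which already identifies each $x$ with $\xhat \in V(\Encp{i})$ through a connected component $H_x$ of $\Encsub{i-1}{\labc{i}{x}}$ and gives $\mapset{i}{x} = \bigcup \bigl\{ \extend{i-1}{y}{\labc{i}{x}} : y \in V(H_x) \bigr\}$ with $\nset{\xhat} = V(H_x)$.

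For the forward direction, suppose $\labini{i}{x}=1$, i.e.\ $\mapf_{\ini}[V_i] \in \mapset{i}{x}$. Because every node of $\mapset{i}{x}$ assigns $\labc{i}{x}$ to $\spine{i}=v_i$, we immediately obtain $\labct{i}{x}=\labc{i}{x}=\mapf_{\ini}(v_i)$. Lemma~\ref{lem:spine1} then places $\mapf_{\ini}[V_i]$ inside some $\extend{i-1}{y}{\labc{i}{x}}$ with $y \in V(H_x) = \nset{\xhat}$, and since an extension $\mapj + \labc{i}{x}$ determines $\mapj$ uniquely, $\mapj$ must coincide with $\mapf_{\ini}[V_{i-1}]$, so $\labini{i-1}{y}=1$. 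Both clauses of the definition of $\labinit{i}{x}$ are met, giving $\labinit{i}{x}=1$.

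Conversely, assume $\labinit{i}{x}=1$: then $\labc{i}{x}=\mapf_{\ini}(v_i)$ and there is a witness $y \in \nset{\xhat}$ with $\labini{i-1}{y}=1$, which unpacks to $\mapf_{\ini}[V_{i-1}] \in \mapset{i-1}{y}$. Since $H_x$ is a connected component of $\Encsub{i-1}{\labc{i}{x}}$, we have $\labc{i-1}{y} \neq \labc{i}{x}$, so extension by $\labc{i}{x}=\mapf_{\ini}(v_i)$ is legal and yields $\mapf_{\ini}[V_{i-1}] + \mapf_{\ini}(v_i) = \mapf_{\ini}[V_i] \in \extend{i-1}{y}{\labc{i}{x}} \subseteq \mapset{i}{x}$ by Lemma~\ref{lem:spine1}. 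Hence $\labini{i}{x}=1$, completing the equivalence.

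The identity $\labtar{i}{x}=\labtart{i}{x}$ is established by the same argument with $\mapf_{\tar}$ substituted for $\mapf_{\ini}$; there is essentially no further obstacle, since the heavy work of tying $\mapset{i}{x}$ to a single component $H_x$ is already done in Lemma~\ref{lem:spine1}. The only minor caveat to check is that if $\mapf_{\tar}[V_{i-1}] \notin V(\Rizero{i-1})$, then every $\labtar{i-1}{y}$ is $0$ and both $\labtar{i}{x}$ and $\labtart{i}{x}$ vanish on all $x$, so the equivalence holds trivially in that degenerate case.
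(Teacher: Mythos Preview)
Your proof is correct and follows essentially the same approach as the paper's: both directions hinge on Lemma~\ref{lem:spine1} to translate membership of $\mapf_{\ini}[V_i]$ in $\mapset{i}{x}$ into membership of $\mapf_{\ini}[V_{i-1}]$ in some $\mapset{i-1}{y}$ with $y \in \nset{\xhat}$, and vice versa. Your version is slightly more explicit about why the extension map determines the restriction uniquely and about the degenerate $\mapf_{\tar}$ case, but the argument is otherwise identical to the paper's.
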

	\begin{proof}
	We prove only $\labinif{i}=\labinift{i}$; it is similar to prove $\labtarf{i}=\labtarft{i}$.

	Let $x$ be any e-node in $\Enc{i}$ such that $\labini{i}{x}=1$.
	Then, $\mapf_\ini[V_i]\in \mapset{i}{x}$, and $\labc{i}{x}=\labct{i}{x}=\mapf_\ini(v_i)$ holds.
	By Lemma~\ref{lem:spine1} there exists an e-node $y\in \nset{x}$ such that $\mapf_\ini[V_{i-1}]\in \mapset{i-1}{y}$ and $\labini{i-1}{y}=1$.
	Since $\mapf_\ini[V_{i}] = \mapf_{\ini}[V_{i-1}] + \labc{i}{x}$, we thus have $\labinit{i}{x}=1$.
	
	Conversely, let $\xhat$ be any e-node in $\Encp{i}$ such that $\labinit{i}{\xhat}=1$.
	Then, $\nset{\xhat}$ contains an e-node $y$ such that $\mapf_\ini[V_{i-1}]\in \mapset{i-1}{y}$ and $\labini{i-1}{y}=1$.
	Note that $y$ is in $\Encsub{i-1}{\labc{i}{\xhat}}$, and $\mapf_\ini[V_{i-1}]+\labc{i}{\xhat}=\mapf_\ini[V_i]$.
	By Lemma~\ref{lem:spine1} we thus have $\mapf_\ini[V_{i}] \in \mapset{i}{\xhat}$ and hence $\labini{i}{\xhat}=1$.
	\qed
	\end{proof}
	
	Finally, we show following lemma.
	\begin{lemma} \label{lem:spine3}
	$\Enct{i} = \Enc{i}$.
	\end{lemma}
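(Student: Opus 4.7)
The plan is to mimic the two-way inclusion argument used in Lemma~\ref{lem:leaf3}, but with adjacency in $\Encp{i}$ now coming from the intersection condition $\nset{x}\cap\nset{y}\neq \emptyset$ rather than from inherited edges of $\Enc{i-1}$. Lemmas~\ref{lem:spine1}, \ref{lem:spine2}, and \ref{lem:spine4} already supply everything I need at the vertex/label/edge level; what remains is to show that the component $\Enct{i}$ singled out of $\Encp{i}$ is exactly the image of $\Enc{i}$ under the identification $x\leftrightarrow \xhat$.

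First I would establish $V(\Enc{i})\subseteq V(\Enct{i})$. Since $\Rizero{i}$ contains $\mapf_\ini[V_i]$, there is an e-node $z$ of $\Enc{i}$ with $\mapf_\ini[V_i]\in \mapset{i}{z}$, hence $\labini{i}{z}=1$; by Lemma~\ref{lem:spine4} the corresponding e-node $\zhat$ in $\Encp{i}$ satisfies $\labinit{i}{\zhat}=1$, so $\zhat\in V(\Enct{i})$ by definition of $\Enct{i}$. Since $\Rizero{i}$ is a connected component of $\Ri{i}$, its quotient $\Enc{i}$ is connected, and Lemma~\ref{lem:spine2} shows that every edge of $\Enc{i}$ corresponds to an edge of $\Encp{i}$. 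Thus the image of $\Enc{i}$ sits in $\Encp{i}$ as a connected subgraph containing $\zhat$, which forces it into the single component $\Enct{i}$.

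For the reverse inclusion $V(\Enct{i})\subseteq V(\Enc{i})$, I argue by contradiction: if some $\yhat\in V(\Enct{i})\setminus V(\Enc{i})$ exists, connectedness of $\Enct{i}$ yields an edge $\xhat\yhat\in E(\Enct{i})$ with $\xhat$ the image of some $x\in V(\Enc{i})$. By construction $\nset{\xhat}\cap\nset{\yhat}\neq\emptyset$, so pick an e-node $a$ in the intersection and any $\mapj\in \mapset{i-1}{a}$. Then $\mapj+\labct{i}{\xhat}\in \mapset{i}{x}\subseteq V(\Rizero{i})$ by Lemma~\ref{lem:spine1}, and $\mapj+\labct{i}{\yhat}$ is a valid $\numk$-list coloring of $G_i$ adjacent to it in $\Ri{i}$ (since $v_i$ is the only vertex whose color differs and $\labct{i}{\xhat}\neq \labct{i}{\yhat}$). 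Hence $\mapj+\labct{i}{\yhat}\in V(\Rizero{i})$ too, so it belongs to $\mapset{i}{x'}$ for some $x'\in V(\Enc{i})$; Lemma~\ref{lem:spine1} then identifies $\nset{\xhat'}$ with the connected component of $\Encsub{i-1}{\labct{i}{\yhat}}$ containing $a$, which by construction is exactly $\nset{\yhat}$. Thus $\xhat'=\yhat$ and $\yhat\in V(\Enc{i})$, a contradiction. Since both graphs share their vertex set and, by Lemma~\ref{lem:spine2}, their edge set (inherited from $\Encp{i}$), we conclude $\Enct{i}=\Enc{i}$.

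The main obstacle is the last step: one must rule out stray e-nodes of $\Enct{i}$ that are reachable via an intersection edge but do not themselves come from $\Rizero{i}$. The key observation that handles this is that a single extension-by-$c$ of a common $\mapj\in \mapset{i-1}{a}$ witnesses the adjacency in $\Rizero{i}$ on both sides, so membership in $\Rizero{i}$ propagates across every edge of $\Encp{i}$ that crosses the boundary; the identification in Lemma~\ref{lem:spine1} then pins the supposedly new e-node $\yhat$ back onto an old one.
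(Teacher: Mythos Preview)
Your proof is correct and follows essentially the same route as the paper's: you first embed $\Enc{i}$ into a single connected component of $\Encp{i}$ via Lemmas~\ref{lem:spine2} and~\ref{lem:spine4}, and then rule out a boundary edge $\xhat\yhat$ by picking $a\in\nset{\xhat}\cap\nset{\yhat}$, extending some $\mapj\in\mapset{i-1}{a}$ by both colors, and invoking Lemma~\ref{lem:spine1} to force $\yhat$ back into $V(\Enc{i})$. Your final paragraph spells out the identification $\xhat'=\yhat$ a bit more explicitly than the paper does, but the argument is the same.
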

	\begin{proof}
	Recall that $\Enc{i}$ consists of a single connected component which contains the e-node $z$ such that $\mapf_{\ini}[V_{i}] \in \mapset{i}{z}$.
	By Lemma~\ref{lem:spine2} $\Enc{i}$ is contained in one connected component of $\Encp{i}$ as a subgraph, and the connected component contains an e-node $\zhat$ such that $\mapf_{\ini}[V_{i}] \in \mapset{i}{\zhat}$.
	By Lemma \ref{lem:spine4} we have $\labinit{i}{\zhat}=1$, and hence the connected component is indeed $\Enct{i}$.
	We thus have $\Enc{i} \subseteq \Enct{i}$.
	
	Therefore, to show $\Enc{i} = \Enct{i}$, by Lemma~\ref{lem:spine2} it suffices to prove that there exists no edge $x \yhat \in E(\Enct{i})$ which joins two e-nodes $x \in V(\Enc{i})$ and $\yhat \in V(\Enct{i}) \setminus V(\Enc{i})$.
	Suppose for a contradiction that there exists such an edge $x \yhat \in E(\Enct{i})$.
	Then, the set $\nset{x} \cap \nset{\yhat}$ contains an e-node $z$.
	Choose an arbitrary node $\mapj \in \mapset{i-1}{z}$, then two nodes $\mapj+\labc{i}{x}$ and $\mapj+\labc{i}{\yhat}$ are adjacent in $\Rizero{i}$.
	Then, $\mapj+\labc{i}{\yhat} \in \bigcup_{z^\prime\in \nset{\yhat}}(\extend{i-1}{z^\prime}{\labc{i}{\yhat}})$.
	By Lemma~\ref{lem:spine1} the corresponding e-node $y$ should be contained in $\Enc{i}$; this contradicts the assumption that $\yhat \in V(\Enct{i}) \setminus V(\Enc{i})$.
	\qed
	\end{proof}	
	
	This completes the proof of Lemma \ref{lem:spine}.

	\subsection{Proof of Lemma~\ref{lem:size}}
	
	In this subsection, we prove Lemma~\ref{lem:size}.
	Since $|V(\Enct{i})| \le |V(\Encp{i})|$ holds, it suffices to prove the following inequality:
for each index $i \in \{1,2,\ldots, n\}$, 
	\begin{equation} \label{eq:size}
		|V(\Encp{i})| \le \left\{
				\begin{array}{ll}
				2 & ~~~\mbox{if $i = 1$}; \\
				|V(\Enc{i-1})| + \degG(v_i) & ~~~\mbox{otherwise}.
				\end{array} \right.
	\end{equation}
	
	Consider the case where $v_i$ is a leaf. 
	Then, $V(\Encp{i}) = V(\Enc{i-1})$, and hence Eq.~(\ref{eq:size}) clearly holds. 
	In the remainder of this subsection, we thus consider the case where $v_i$ is a spine vertex.
	
	For a graph $G=(V,E)$, we denote by $\comp{G}$ the number of connected components in $G$.
	For a connected graph $G$, that is, $\comp{G}=1$, we denote by $\stree{G}$ any spanning tree of $G$.
	Since $E(\stree{G}) \subseteq E(G)$, we clearly have the following proposition.
%
%
	\begin{proposition} \label{pro:span}
	Let $G$ be a connected graph, and let $V_0$ be any vertex subset of $G$.
	Then, $\comp{G[V_0]}\le \comp{\stree{G}[V_0]}$ holds.
	\end{proposition}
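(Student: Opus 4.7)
The plan is to use the elementary fact that adding edges to a graph can only merge connected components, never split them. Specifically, since $\stree{G}$ is a spanning tree of $G$, we have $V(\stree{G}) = V(G)$ and $E(\stree{G}) \subseteq E(G)$. Restricting both graphs to $V_0$ then gives $V(\stree{G}[V_0]) = V(G[V_0]) = V_0$ while $E(\stree{G}[V_0]) \subseteq E(G[V_0])$, so both induced subgraphs have the same vertex set and we are comparing the number of components of two graphs on the same vertex set that differ only by a set of additional edges.

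The key step is to show that the component partition of $\stree{G}[V_0]$ refines the component partition of $G[V_0]$. I would argue this directly: let $u, v \in V_0$ lie in the same connected component of $\stree{G}[V_0]$, so there is a path $u = w_1, w_2, \ldots, w_k = v$ in $\stree{G}[V_0]$. Each edge $w_j w_{j+1}$ of this path belongs to $E(\stree{G}[V_0]) \subseteq E(G[V_0])$, hence the same vertex sequence is also a path in $G[V_0]$. Therefore $u$ and $v$ belong to the same connected component of $G[V_0]$.

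It follows that each connected component of $G[V_0]$ is a union of one or more connected components of $\stree{G}[V_0]$, and taking cardinalities yields $\comp{G[V_0]} \le \comp{\stree{G}[V_0]}$, as claimed. There is no real obstacle in this argument; the only point that deserves explicit mention is that $V(\stree{G}[V_0]) = V(G[V_0])$ (both equal $V_0$), which guarantees that the component counts are defined relative to the same underlying vertex set and makes the above refinement argument valid.
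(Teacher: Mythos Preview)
Your argument is correct and is exactly the elaboration of the paper's one-line justification: the paper simply observes that the proposition follows immediately from $E(\stree{G}) \subseteq E(G)$, which is precisely the edge-inclusion/refinement argument you spell out.
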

	
	We now apply Case (B) of our algorithm to any spanning tree $\stree{\Enc{i-1}}$ of $\Enc{i-1}$, instead of applying the operation to $\Enc{i-1}$.
	Let $\Encst{i}$ be the obtained encoding graph, instead of $\Encp{i}$.
	Then, we have the following lemma.
	\begin{lemma} \label{lem:size1}
	$|V(\Encp{i})| \le |V(\Encst{i})|$.
	\end{lemma}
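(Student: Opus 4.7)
The plan is to express both $|V(\Encp{i})|$ and $|V(\Encst{i})|$ as sums (over the colors in $\verlist(v_i)$) of the numbers of connected components of certain induced subgraphs, and then compare the two sums color by color using Proposition~\ref{pro:span}. Since $\Encp{i}=\Enc{i-1}$ when $v_i\in \Vleaf$ (Case (A)), the lemma is trivial in that case, so the plan is to handle only Case (B), where $v_i\in\Vspine$.

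First I would observe that $\Enc{i-1}$ is by definition connected, because it is obtained from the single connected component $\Rizero{i-1}$ of $\Ri{i-1}$ by contracting each equivalence class of $\eqrel{\spine{i-1}}$, and contraction preserves connectivity. Hence a spanning tree $\stree{\Enc{i-1}}$ exists, and the modified algorithm that produces $\Encst{i}$ is well-defined.

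Next, I would unfold the construction in Case (B) to count vertices. For every color $c\in \verlist(v_i)$, $\Encsub{i-1}{c}$ is exactly the induced subgraph $\Enc{i-1}[V_c]$, where $V_c:=\{y\in V(\Enc{i-1}):\labc{i-1}{y}\neq c\}$. The algorithm adds one new e-node to $\Encp{i}$ per connected component of $\Encsub{i-1}{c}$, so
\[
|V(\Encp{i})|=\sum_{c\in \verlist(v_i)}\comp{\Enc{i-1}[V_c]}.
\]
Running the very same construction on $\stree{\Enc{i-1}}$ instead of $\Enc{i-1}$ deletes the same vertex set $V(\Enc{i-1})\setminus V_c$, and therefore
\[
|V(\Encst{i})|=\sum_{c\in \verlist(v_i)}\comp{\stree{\Enc{i-1}}[V_c]}.
\]

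Finally, I would apply Proposition~\ref{pro:span} with $G=\Enc{i-1}$ and $V_0=V_c$ for each $c\in\verlist(v_i)$ to obtain $\comp{\Enc{i-1}[V_c]}\le \comp{\stree{\Enc{i-1}}[V_c]}$, and sum over $c$ to conclude $|V(\Encp{i})|\le|V(\Encst{i})|$. There is no real obstacle here; the only mild subtlety is making sure the same set of vertices is deleted in both constructions so that the induced subgraphs are directly comparable termwise, which is immediate from the definitions of $\Encsub{i-1}{c}$ and of Case (B).
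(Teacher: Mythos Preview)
Your proposal is correct and follows essentially the same argument as the paper: expressing $|V(\Encp{i})|$ and $|V(\Encst{i})|$ as $\sum_{c\in\verlist(v_i)}\comp{\Encsub{i-1}{c}}$ and $\sum_{c\in\verlist(v_i)}\comp{\Encsubst{i-1}{c}}$ respectively, and then applying Proposition~\ref{pro:span} termwise. The only superfluous part is your aside about Case~(A), since in the paper $\Encst{i}$ is defined (and the lemma stated) only in the spine-vertex case.
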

	\begin{proof}
	For each color $c \in \verlist(v_i)$, let $\Encsubst{i-1}{c}$ be the subgraph of $\stree{\Enc{i-1}}$ obtained by deleting all e-nodes $y$ in $\stree{\Enc{i-1}}$ with $\labc{i-1}{y} = c$.
	Then,
	\[
		|V(\Encp{i})|=\sum_{c\in \verlist(v_{i})}\comp{\Encsub{i-1}{c}},
	\]
and
	\begin{equation} \label{eq:size2}
		|V(\Encst{i})|=\sum_{c\in \verlist(v_{i})}\comp{\Encsubst{i-1}{c}}.
	\end{equation}
	By Proposition~\ref{pro:span} we have $\comp{\Encsub{i-1}{c}}\le \comp{\Encsubst{i-1}{c}}$ for each color $c\in \verlist(v_{i})$, and hence $|V(\Encp{i})|\le |V(\Encst{i})|$.
	\qed
	\end{proof}
	
	We finally show the following lemma, which verifies Eq.~(\ref{eq:size}) and hence completes the proof of Lemma~\ref{lem:size}.
	\begin{lemma}
	If $v_i$ is a spine vertex, then $|V(\Encp{i})|\le |V(\Enc{i-1})| + \degG(v_i)$.
	\end{lemma}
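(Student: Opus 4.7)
The plan is to invoke Lemma~\ref{lem:size1} and then count components in the spanning tree $T := \stree{\Enc{i-1}}$. Write $N := |V(\Enc{i-1})|$; by Lemma~\ref{lem:size1} it is enough to show
\[
|V(\Encst{i})| \;=\; \sum_{c \in \verlist(v_i)} \comp{\Encsubst{i-1}{c}} \;\le\; N + \degG(v_i).
\]

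The crucial structural fact is that any two adjacent e-nodes of $\Enc{i-1}$ carry different labels under $\labcf{i-1}$. Indeed, if $x \ne y$ were adjacent with $\labc{i-1}{x} = \labc{i-1}{y}$, then any edge in $\Rizero{i-1}$ witnessing $xy \in E(\Enc{i-1})$ would recolor some vertex different from $\spine{i-1}$, making its two endpoints $\eqrel{\spine{i-1}}$-equivalent and forcing $x = y$ --- a contradiction. The property is inherited by $T$, so every edge of $T$ joins two e-nodes of distinct $\labcf{i-1}$-labels.

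For each $c \in \verlist(v_i)$, set $V_c^* := \{y \in V(T) : \labc{i-1}{y} = c\}$, $n_c := |V_c^*|$, and let $e_c^*$ be the number of edges of $T$ incident to $V_c^*$. A standard vertex-edge count on the forest $T - V_c^*$ yields $\comp{\Encsubst{i-1}{c}} = 1 - n_c + e_c^*$ (valid also when $V_c^*$ is empty or all of $V(T)$). Summing over $c$,
\[
|V(\Encst{i})| \;=\; |\verlist(v_i)| - \sum_{c} n_c + \sum_{c} e_c^*.
\]
By the colour-distinctness property, every edge of $T$ has endpoints of different labels, so each edge contributes to $e_c^*$ exactly once for each of its endpoints whose colour lies in $\verlist(v_i)$. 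Re-indexing the double count by vertex, with $V^* := \bigcup_{c \in \verlist(v_i)} V_c^*$,
\[
\sum_{c \in \verlist(v_i)} e_c^* \;=\; \sum_{v \in V^*} \deg_T(v),
\qquad \text{hence}\qquad
|V(\Encst{i})| \;=\; |\verlist(v_i)| + \sum_{v \in V^*} \bigl(\deg_T(v) - 1\bigr).
\]

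Finally, for $N \ge 2$, apply $|\verlist(v_i)| \le \degG(v_i) + 1$ (Lemma~\ref{lem:transform}) together with the tree identity $\sum_{v \in V(T)}(\deg_T(v) - 1) = N - 2$ and the non-negativity of each summand ($T$ is connected, so every degree is $\ge 1$). This yields $|V(\Encst{i})| \le (\degG(v_i)+1) + (N-2) = N + \degG(v_i) - 1 \le |V(\Enc{i-1})| + \degG(v_i)$. The degenerate case $N = 1$ is immediate since then $|V(\Encp{i})| \le |\verlist(v_i)| \le 1 + \degG(v_i)$. The one nontrivial step --- and the main obstacle --- is the colour-distinctness observation on $\Enc{i-1}$, which must be extracted from the definition of $\eqrel{\spine{i-1}}$; once it is available, the bound follows from an elementary double count on the spanning tree.
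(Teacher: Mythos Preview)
Your proof is correct and follows essentially the same route as the paper: reduce to a spanning tree via Lemma~\ref{lem:size1}, express each component count $\comp{\Encsubst{i-1}{c}}$ in terms of tree degrees, sum over $c\in\verlist(v_i)$, and finish with the tree identity $\sum_{v}(\deg_T(v)-1)=N-2$ together with $|\verlist(v_i)|\le \degG(v_i)+1$. The one addition is your explicit colour-distinctness observation (adjacent e-nodes of $\Enc{i-1}$ carry different $\labcf{i-1}$-labels); this cleanly justifies the equality the paper states in its Eq.~(\ref{eq:size3}) --- the paper does not spell this out, though only the inequality direction is needed and that holds regardless.
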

	\begin{proof}
	We first consider the case where $|V(\Enc{i-1})| = 1$.
	Then, $\comp{\Encsub{i-1}{c}} \le 1$ for any color $c\in \verlist(v_i)$, and hence 
	\[
		|V(\Encp{i})| =\sum_{c\in \verlist(v_{i})}\comp{\Encsub{i-1}{c}} \le |\verlist(v_i)|.
	\]
	By Lemma~\ref{lem:transform} we have $|\verlist(v_i)| \le \degG(v_i) + 1$, and hence 
	\[
		|V(\Encp{i})| \le 1+\degG(v_i) = |V(\Enc{i-1})| + \degG(v_i).
	\]
	
	We then consider the case where $|V(\Enc{i-1})|\ge 2$.
	Recall that $\stree{\Enc{i-1}}$ is a spanning tree of $\Enc{i-1}$, and hence $V(\stree{\Enc{i-1}}) = V(\Enc{i-1})$. 
	For each color $c \in \verlist(v_i)$, let $\Xic{i-1}{c}=\hset{x\in V(\stree{\Enc{i-1}}):\labc{i-1}{x}=c}$.
	For each vertex $x \in V(\stree{\Enc{i-1}})$, we denote by $\degG(\stree{\Enc{i-1}},x)$ the degree of $x$ in $\stree{\Enc{i-1}}$. 
	Then, by deleting $x$ from $\stree{\Enc{i-1}}$, the number of connected components in the resulting graph is increased by $\degG(\stree{\Enc{i-1}}, x) - 1$. 
	We thus have
	\begin{eqnarray} 
		\comp{\Encsubst{i-1}{c}} &=& \comp{\stree{\Enc{i-1}}} +\sum \Bigl\{ \degG(\stree{\Enc{i-1}}, x)-1 : x\in \Xic{i-1}{c} \Bigr\} \nonumber \\
			&=& 1+\sum \Bigl\{ \degG(\stree{\Enc{i-1}}, x)-1 : x\in \Xic{i-1}{c} \Bigr\}. \label{eq:size3}
	\end{eqnarray}
	By Lemma~\ref{lem:size1} and Eq.~(\ref{eq:size2}) we have
	\begin{eqnarray*}
		|V(\Encp{i})| &\le& |V(\Encst{i})| = \sum_{c\in \verlist(v_{i})}\comp{\Encsubst{i-1}{c}}.
	\end{eqnarray*}
	Therefore, by Eq.~(\ref{eq:size3}) we have
	\begin{eqnarray}
			|V(\Encp{i})| &\le& \sum_{c\in \verlist(v_{i})} \left(1+\sum \Bigl\{ \degG(\stree{\Enc{i-1}}, x)-1 : x\in \Xic{i-1}{c} \Bigr\} \right) \nonumber \\
							&\le& |\verlist(v_i)|+\sum \Bigl\{ \degG(\stree{\Enc{i-1}}, x)-1 : x\in V(\stree{\Enc{i-1}})  \Bigr\} \nonumber \\
							&=& |\verlist(v_i)|+2|E(\stree{\Enc{i-1}})|-|V(\stree{\Enc{i-1}})|. \label{eq:size4}
	\end{eqnarray}
	Since $\stree{\Enc{i-1}}$ is a tree, $|E(\stree{\Enc{i-1}})| = |V(\stree{\Enc{i-1}})|-1$.
	Furthermore, recall that $\stree{\Enc{i-1}}$ is a spanning tree of $\Enc{i-1}$, and hence $V(\stree{\Enc{i-1}}) = V(\Enc{i-1})$. 
	By Eq.~(\ref{eq:size4}) we thus have $|V(\Encp{i})| \le |\verlist(v_i)|+|V(\Enc{i-1})|-2$. 
	By Lemma~\ref{lem:transform} we have $|\verlist(v_i)| \le \degG(v_i) +1$, and hence 
	\[
		|V(\Encp{i})| \le |\verlist(v_i)|+|V(\Enc{i-1})|-2\le |V(\Enc{i-1})|+\degG(v_i)-1,
	\]
as required.
	\qed
	\end{proof}

\end{document}